\documentclass[runningheads]{llncs}

\usepackage{graphicx} % Required for inserting images

\usepackage{lineno}% for line no.
\usepackage[utf8]{inputenc}
\usepackage[T1]{fontenc}
\usepackage{todonotes}
\usepackage{amsmath,amssymb,amsfonts}
\usepackage{xcolor}
\usepackage{multirow}
\usepackage{subcaption}
\usepackage[ruled, linesnumbered]{algorithm2e}
\usepackage{float}
\let\oldnl\nl% Store \nl in \oldnl
\newcommand{\nonl}{\renewcommand{\nl}{\let\nl\oldnl}}% Remove line number for one line
\usepackage{algpseudocode}
\usepackage{comment}
\usepackage{multicol}
\usepackage{vwcol}
\usepackage{tikz}
\usepackage[most]{tcolorbox}
\usepackage{todonotes}
\usepackage{mathtools}

\usepackage{caption}
\usepackage{float}
\usepackage{wrapfig,booktabs}

\usetikzlibrary{fit}
\usetikzlibrary{patterns}
\newtheorem{exmp}{Example}

\newtheorem{rmark}{Remark}

\usepackage{enumitem}
\usepackage{comment}
\usepackage{todonotes}
\usepackage{url}

\usepackage[normalem]{ulem} %strike through text

\usepackage[autostyle, english = american]{csquotes}
\MakeOuterQuote{"}

\newcommand{\xra}[1]{\overset{#1}{\rightsquigarrow}}
\usepackage{stmaryrd} %for \llbracket

\newcommand{\sr}[1]{{\textsf{\color{red} \small{[#1 -Srinivas-]}}}}
\newcommand{\rr}[1]{{\textsf{\color{purple} \small{[#1 -Rajarshi-]}}}}

\newcommand{\ignore}[1]{{}}

\setlist{nolistsep}
\algnewcommand{\LeftComment}[1]{\Statex \(\triangleright\) #1}

\begin{document}
%\linenumbers % Enables line numbers
\title{Runtime Enforcement of Hybrid System Properties}

%\begin{comment}
\author{Mir Md Sajid Sarwar\inst{1} \and Srinivas Pinisetty\inst{1} \and Rajarshi Ray\inst{2} \and Thierry J\'eron\inst{3}}
%
%\authorrunning{F. Author et al.}
% First names are abbreviated in the running head.
% If there are more than two authors, 'et al.' is used.
%
\institute{Indian Institute of Technology Bhubaneswar, Bhubaneswar 752050, India\\
\email{sajidsarwar2011@gmail.com,spinisetty@iitbbs.ac.in} \and
Indian Association for the Cultivation of Science, Kolkata 700032, India\\
\email{rajarshi.ray@iacs.res.in} \and
Univ Rennes, Inria, CNRS, IRISA, France\\
\email{Thierry.Jeron@inria.fr}} 
%\end{comment}
%\author{}
%\institute{}

\maketitle

%\vspace{-20pt}
\begin{abstract}
Runtime enforcement has emerged as a promising approach for ensuring the safety of autonomous and cyber-physical systems operating in uncertain and dynamic environments. Unlike traditional runtime verification, runtime enforcement actively intervenes during execution to prevent property violations by modifying unsafe system behaviors. Existing enforcement frameworks primarily focus on untimed or discrete-time specifications and are often limited to delaying or suppressing events, making them inadequate for reactive systems exhibiting complex continuous dynamics. In this paper, we propose a runtime enforcement framework where safety requirements are modeled using Hybrid Automata (HA). The framework combines discrete-event editing with continuous-time monitoring to support enforcement actions such as suppression, delay, and insertion of events at arbitrary time instants. Upon observing environmental inputs, the automaton is initialized, and runtime reachability analysis is used to synthesize safe corrective actions. We formally define the enforcement problem for safety hybrid automata, establish enforceability conditions, and present an online enforcement algorithm for reactive systems. A detailed case study on an Adaptive Cruise Control (ACC) system demonstrates the effectiveness of the proposed approach in maintaining safety properties under unsafe controller behaviors. Experimental results show that the framework introduces minimal computational overhead while ensuring continuous compliance with safety requirements in real time.
\keywords{Runtime Enforcement \and Monitoring \and Hybrid Systems.}
\end{abstract}

\section{Introduction}
%Runtime monitoring is getting significant research interest, as formally verifying a complex autonomous system is not feasible.

The current engineering landscape for safety-critical autonomous systems is seeing a fundamental shift in how we approach safety from relying solely on design-time guarantees to incorporating run-time assurances. 
Traditional approaches, such as formal verification and simulation-based testing, attempt to establish system correctness prior to deployment.
However, complete verification is often infeasible due to the complexity of autonomous systems and the uncertainty of real-world environments. Similarly, testing depends on scenario-based simulations with simplified models, where even exhaustive success cannot guarantee real-world reliability. Moreover, systems proven correct under design assumptions may still fail when environmental conditions deviate from those assumptions. Therefore, these methods alone are insufficient and require complementary runtime monitoring and enforcement mechanisms capable of detecting and mitigating safety violations during execution.

%The current engineering landscape for safety-critical autonomous systems is seeing a fundamental shift in how we approach safety from relying solely on design-time guarantees (proving or testing before deployment that nothing will go wrong) to incorporating run-time assurances (accepting that things might go wrong, but ensuring we catch them immediately). Part of the reason for this paradigm shift is that full verification of these systems is infeasible due to their inherent complexity and the unpredictable nature of their operating environments. In contrast, testing relies on scenario-based simulations with simplified physics and sensor models. However, a 100\% success rate in simulation does not guarantee reliability in the real world.
%Even systems that are provably correct can fail if the real world deviates from the design specifications. Therefore, these methods alone are insufficient and require complementary runtime monitoring to detect errors during active deployment.

Runtime Enforcement (RE)~\cite{DBLP:journals/fmsd/FalconeMFR11,DBLP:journals/tissec/LigattiBW09,phdPinisetty15,DBLP:journals/fmsd/PinisettyFJMRN14,DBLP:journals/tissec/Schneider00} is a discipline concerned with ensuring that systems adhere to prescribed behaviors during execution. It extends the traditional scope of Runtime Verification (RV)~\cite{DBLP:journals/tosem/BauerLS11,DBLP:conf/rv/Falcone10,DBLP:journals/sttt/FalconeFM12,DBLP:series/natosec/FalconeHR13,DBLP:conf/icse/PinisettySS18} by moving beyond passive observation. While RV monitors a "black-box" system to detect property violations, RE functions as an execution modifier. By synthesizing an enforcer that actively intervenes when a system risks deviating from its specification. RE maintains system integrity through corrective actions. These interventions include halting execution to prevent violations~\cite{DBLP:journals/tissec/Schneider00}, modifying event sequences by suppressing or inserting actions~\cite{DBLP:journals/tissec/LigattiBW09}, and buffering inputs to be forwarded only when safety is guaranteed~\cite{DBLP:journals/fmsd/PinisettyFJMRN14,DBLP:journals/scl/FalconeJMP16}.

Systems under Monitor (SuM) produce execution traces or events that reflect their underlying state. The monitor evaluates these streams against specifications, commonly expressed through temporal logics like LTL~\cite{DBLP:conf/focs/Pnueli77} and STL~\cite{DBLP:conf/formats/MalerN04}, or through timed automata~\cite{DBLP:journals/scl/FalconeJMP16}. While these formalisms are advantageous for capturing the timing constraints essential to real-world applications, they possess inherent limitations. Temporal logics often provide a coarse, abstract representation that fails to capture complex, continuous system dynamics, while timed automata introduce clock-based evolution, which is suitable for a limited subclass of system behaviors, leaving more complex dynamics unaddressed.

%Hybrid Systems exhibit mixed discrete and continuous behaviors, providing high-fidelity models for real-world scenarios. %These systems are inherently complex, often requiring the modeling of non-linear dynamics via differential equations.% Commonly referred to as Cyber-Physical Systems (CPS), they integrate continuous physical processes with discrete digital control.
Hybrid Automata (HA)~\cite{ALUR19953,DBLP:conf/hybrid/AlurCHH92} are a well-known mathematical framework for the modeling and specification of hybrid systems that exhibit a combination of discrete and continuous dynamics. Hybrid automata are generalizations of timed automata, and consequently, specifications given as HA expand the complexity of specifications that one would like to effectively monitor.
%offering the expressive power to encode complex continuous behaviors that standard temporal logics cannot capture \rr{are we comparing ha with linear temporal logic? One is an automaton and the other is a specification language. How can we relate them? Do we want to compare the richness of ha with timed automaton?}. %Additionally, they are well-suited for capturing the structure of systems. 
%Consequently, enforcing specifications given as HA significantly expands the range and complexity of systems that can be effectively monitored.

RE has been extensively formalized through various automata-based models. Security Automata~\cite{DBLP:journals/tissec/Schneider00} were introduced to enforce safety properties by blocking execution sequences that violate a given specification. To broaden the scope of enforceable properties, Edit Automata~\cite{DBLP:journals/tissec/LigattiBW09} were developed, enabling the enforcer to transform input sequences by suppressing and/or inserting events. Further advancements by Falcone et al.~\cite{DBLP:journals/fmsd/FalconeMFR11} introduced mechanisms capable of buffering events and releasing them only upon the observation of a satisfying trace. However, these foundational works primarily address untimed behaviors, largely ignore the complexities of real-time systems.
While RE frameworks based on Discrete Timed Automata (DTA)~\cite{DBLP:journals/tecs/PinisettyRSATH17} and Timed Automata (TA)~\cite{DBLP:journals/fmsd/PinisettyFJMRN14,DBLP:journals/scl/FalconeJMP16} have extended enforcement to real-time properties, they possess inherent limitations. Corrective actions in DTA-based approaches~\cite{DBLP:journals/tecs/PinisettyRSATH17} are typically restricted to discrete time points, whereas dense-time TA frameworks~\cite{DBLP:journals/fmsd/PinisettyFJMRN14,DBLP:journals/scl/FalconeJMP16} are generally limited to buffering and delaying mechanisms, and they are not suitable for reactive systems. The enforcement frameworks that allow instantaneous editing of events are restricted to untimed~\cite{DBLP:journals/tissec/LigattiBW09,DBLP:conf/spin/PinisettyRSTH17}, and discrete time specifications~\cite{DBLP:journals/tecs/PinisettyRSATH17}. This leaves a critical gap in enforcement for complex systems that require proactive, continuous-time interventions to maintain safety.

%\sr{THE TA WORKS ARE NOT RESTRICTED TO DISCRETE TIME POINTS BUT THE ENFORCEMENT ACTIONS IS LIMITED TO BUFFER/DELAY..}
%\red{Add a paragraph discussing existing RE/monitoring works for hybrid/CPS. There the policies to be monitored/enforced are expressed using basic automata/ DTA/ etc. Motivate why considering the policies to be enforced as HA will be advantaguous.}

\begin{figure}[htbp]
    \vspace{-20pt}
    \captionsetup{font=scriptsize}
    \centering
    \includegraphics[width=0.5\textwidth]{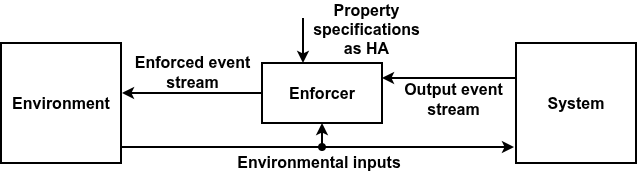}
    \caption{Runtime Enforcement Framework.}
    \label{fig:enforcer}
    \vspace{-20pt}
\end{figure}

% \begin{wrapfigure}{r}{0.5\textwidth}
%   \begin{center}
%     \includegraphics[width=0.48\textwidth]{images/enforcer.drawio.png}
%   \end{center}
%   \caption{Runtime Enforcement Framework.}
%   \label{fig:enforcer}
% \end{wrapfigure}

%\red{Figure.. SYSTEM $->$ ENF :output event stream; ENF $->$ ENV: Corrected/enforced output stream }
Utilizing hybrid automata for monitor specifications is particularly advantageous for real-time systems, as it enables dense-time monitoring between discrete sampling points. This capability is critical for maintaining safety across continuous-time trajectories.
%Furthermore, to enhance the framework's versatility, we adopt a parametric modeling approach, allowing the monitor to be generalized and instantiated across various operational contexts.
In this work, we present a \emph{runtime enforcement framework} wherein system specifications are formalized as \emph{hybrid automata} (HA). This formalism explicitly captures continuous flows via differential equations, discrete execution steps via transition guards, and operational boundaries using state invariants. Upon receiving an environmental observation, the HA is initialized, and an enforcement strategy is then dynamically synthesized by performing reachability analysis on the automaton.
%\rr{very confusing to me. RA is done on which automaton? set of safety properties?}
As illustrated in Figure~\ref{fig:enforcer}, the enforcer functions as an additional layer of security around the system that intercepts and regulates interactions between the system and its environment.

%In this work, we present a \emph{runtime enforcement framework} wherein system specifications are defined as \emph{parametric hybrid automata} (PHA)~\cite{DBLP:conf/itsc/ChaiWLLH19}.
%This formalism incorporates parametric expressions for continuous flows, discrete transition guards, and state invariants. Upon receiving an observation, the PHA is instantiated into a standard hybrid automaton.
%The inputs to the enforcer are these property specifications, inputs from the environment, and a stream of events/commands from the system or the controller of the system, and the enforcer outputs a stream of events/commands. 
The enforcer for a given property $\varphi$ takes as input a stream of inputs from the environment, and a stream of events/commands from the system, and it produces a stream of events/commands as output.
It ensures proactively that the output sequence of events adheres to the specification properties and thus ensures a safe execution in the environment.
%Most of the earlier works in this direction where enforcers are designed to suppress or delay~\cite{DBLP:journals/fmsd/PinisettyFJMRN14,DBLP:journals/scl/FalconeJMP16} an event so that the specification properties are not violated. In this work, we also consider inserting/editing an event whenever necessary to ensure the property is not violated.
%{The existing works in this direction supporting dense timed properties are restricted to delaying and suppressing events as possible enforcement actions and thus are not suitable for reactive systems~\cite{DBLP:journals/fmsd/PinisettyFJMRN14,DBLP:journals/scl/FalconeJMP16}. The enforcement frameworks that allow instantaneous editing of events are restricted to untimed~\cite{DBLP:journals/tissec/LigattiBW09,DBLP:conf/spin/PinisettyRSTH17}, and discrete time specifications~\cite{DBLP:journals/tecs/PinisettyRSATH17}.
With respect to enforcement actions, it allows instantaneous editing at discrete steps/observation points, and in addition, insertion of events between two consecutive observation points whenever necessary to ensure the policies are not violated.
%In this work, we consider policies to be expressed as PHA, and with respect to enforcement actions, allow instantaneous editing at discrete steps/observation points, and in addition, insertion of events between two consecutive observation points whenever necessary to ensure the policies are not violated.\rr{We earlier said that PHA are used as specifications. Here, we are saying that PHA are used to express policy of enforcement. Confusing to me!}
In summary, the key contributions of this work are:
\begin{itemize}
    \item We propose a novel \emph{runtime enforcement framework} for reactive CPS based on HA specifications, enabling continuous-time monitoring and corrective intervention during system execution.
    \item We formally define the \emph{runtime enforcement problem} for safety properties specified using safety hybrid automata, extending existing runtime enforcement approaches beyond untimed and discrete-time formalisms.
    \item We present an online enforcement algorithm capable of synthesizing safe execution traces through \emph{event suppression}, \emph{delay}, and \emph{insertion} while respecting system timing constraints and maintaining reactive behavior.
\end{itemize}

\section{Motivating Example} \label{sec:motive}
{In this section, we illustrate the intended behavior of an enforcer synthesized from a property specified as a hybrid automaton using the proposed approach via a simplified scenario related to a reactive adaptive cruise control (ACC) system of an autonomous car.}
%In this section, we present a motivating example for runtime enforcement (RE) policy synthesis focused on a \textcolor{blue}{reactive} adaptive cruise control (ACC) system of an autonomous car. 
%The primary safety specification requires the vehicle to maintain a safe distance to avoid a collision with the lead vehicle. We consider a scenario where the vehicle’s primary controller is faulty, potentially issuing commands that jeopardize safety. The role of the enforcer is to intercept these events and apply corrective measures—such as suppressing, delaying, or inserting new commands—to prevent a collision preemptively. \textcolor{blue}{In the reactive setup, where every clock tick, an event is observed from the incoming event stream and instantaneously rectified (suppression/delaying also handled with this), when the event violates the monitoring property. In the time period between ticks, we also allow events to be inserted, e.g., when invariants are violated.}
The primary safety specification dictates that the vehicle maintain a safe distance to prevent collisions. We assume a scenario where the vehicle’s primary controller is potentially faulty, issuing commands that jeopardize system safety. The enforcer’s role is to intercept these events and apply corrective operators—specifically suppression, delay, or insertion—to maintain safety invariants. {In our reactive configuration, the enforcer monitors the input stream at discrete clock ticks, where observed events are synchronously rectified through suppression or delay if they violate the specification. To bridge the gap between these discrete observations, the framework also facilitates inter-tick interventions. This allows the enforcer to proactively insert new commands into the stream whenever a continuous-time invariant violation is predicted during the interval between two sampling points.}
%\sr{\color{magenta}{We can see to rephrase the above to explain that we consider the reactive setup where every tick events are observed and instantaneously rectified (suppression/delaying also handled with this). In the time period between ticks, we also allow events to be inserted e.g., when...}}

%The car has 3 variables: $a$ and $v$ represent the acceleration and speed of the car, whereas $d$ represents the distance with respect to the front car. Let $-3\leq a \leq 3$, where a negative value of $a$ indicates the car is decelerating. To be on the safe side, the car should maintain a distance $d>10$ from the front car.
%The controller can generate events from the set of actions, $\Sigma$ = $\{acc$, $dec$, $cru$, $br$, $st$, $\textcolor{blue}{dummy}\}$. $acc$ and $dec$ will accelerate and decelerate the car, respectively. $cru$ will move the car at constant speed, $br$ is for making brake, whereas $st$ will stop the car. \textcolor{blue}{$dummy$ represents a dummy action which does nothing but allow continuation of the previous event.}
%Let $v_f$ denote the velocity of the preceding car. For the purpose of this analysis, we assume $v_f$ remains at a constant speed of 50 km/h, which is approximately 13.89 m/s.

The vehicle state is defined by three variables: $a$ and $v$, representing acceleration and velocity respectively, and $d$, representing the relative distance to the preceding vehicle. The acceleration is constrained to the interval $[-3, 3]\, \text{m/s}^2$, where negative values denote deceleration. 
%To satisfy the safety property $\varphi$, the vehicle must maintain a minimum distance of $d > 10$ meters from the leading car.
{Let us consider the safety property to be monitored $\varphi$ to be: ``the vehicle must maintain a minimum distance of $d > 10$ meters from the leading car''.}

The controller operates over an alphabet of discrete actions $\Sigma$ = $\{acc$, $dec$, $cru$, $br$, $st$, $\textit{stutter}\}$. Specifically, $acc$ and $dec$ initiate acceleration and deceleration, $cru$ maintains a constant velocity, and $br$ and $st$ execute braking and stopping sequences, respectively. The $\textit{stutter}$ action represents a stuttering transition, which performs no state change but facilitates the logical continuation of the preceding event. For this analysis, the velocity of the preceding vehicle ($v_f$) is assumed to be a constant 50 km/h ($\approx 13.89$ m/s).

\begin{figure}[htbp]
    \captionsetup{font=scriptsize}
    \vspace{-20pt}
    \centering
    \begin{subfigure}[b]{0.44\textwidth}
        \centering
        \includegraphics[width=0.75\textwidth]{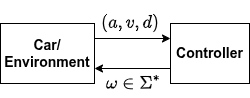}
        \caption{\scriptsize{The car without an Enforcer.}}
        \label{fig:without-enforcer}
    \end{subfigure}
    \hfill
    \begin{subfigure}[b]{0.55\textwidth}
        \centering
        \includegraphics[width=0.95\textwidth]{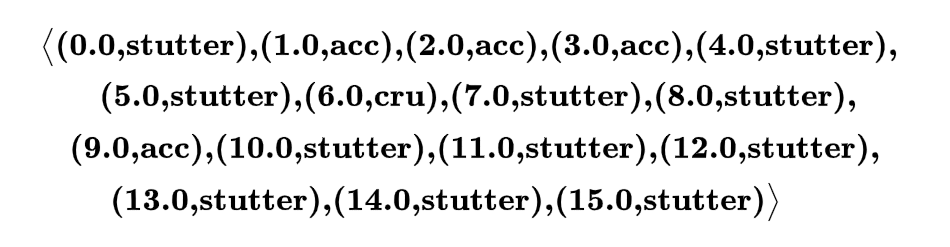}
        \caption{\scriptsize{The events stream from the controller.}}
        \label{fig:event-stream}
    \end{subfigure}
    \caption{The ACC system without a runtime enforcement layer and an example event stream from the controller that could lead to a safety violation.}
    \label{fig:unsafe-example}
    \vspace{-20pt}
\end{figure}

\begin{exmp}
%Consider a scenario when there is no enforcer, as in Figure~\ref{fig:without-enforcer}, and the output of the controller is directly feed into the environment.
%Initially, assume that the speed $v$ of the car is $10\ \text{m/s}$, and the distance $d$ from the front car is $95$ meters.
%\textcolor{blue}{The ACC to be a reactive system, which means it acts in synchronization with the outside environment. In each clock tick, the controller senses the environmental inputs and outputs its command.}
%Let assume, the controller has generated an events stream (shown in Figure~\ref{fig:event-stream}, \textcolor{blue}{the events not shown in clock ticks are $dummy$ events}). For each event in the stream, the first component is the time instance, and the second component is an action/command for the car to follow. For example, the event $(1.0, acc)$ specifies that the action $acc$ should be applied on time point $1.0$ to accelerate the car. Figure~\ref{fig:unsafe-run} shows that the application of these events during execution leads the car to unsafe conditions where the distance from the front car reduces to less than 10 meters.
Consider the scenario depicted in Figure~\ref{fig:without-enforcer}, where the controller's output is fed directly into the environment without an intervening enforcement layer. Assume that initially the car's speed is $v = 10\ \text{m/s}$ and the distance from the preceding vehicle is $d = 95$ meters.

We consider the ACC as a reactive system that operates synchronously with its environment; at each discrete clock tick, the controller samples environmental inputs and emits a control command. Suppose the controller generates the event stream illustrated in Figure~\ref{fig:event-stream}, where time intervals between explicit actions are populated by $\textit{stutter}$ events. These $\textit{stutter}$ actions function as stuttering transitions, maintaining the system's current state until the next command. Each event is defined as a tuple $(\tau, \alpha)$, representing a time instance $\tau$ and an action $\alpha$. For example, the tuple $(1.0, acc)$ mandates that the acceleration action be initiated at $t=1.0$. As demonstrated in Figure~\ref{fig:unsafe-run}, the unmediated execution of this event stream results in a safety violation, as the distance $d$ eventually falls below the $10$-meter threshold.
\end{exmp}

\begin{figure}[htbp]
    \captionsetup{font=scriptsize}
    \vspace{-20pt}
    \centering
    \includegraphics[width=0.85\textwidth]{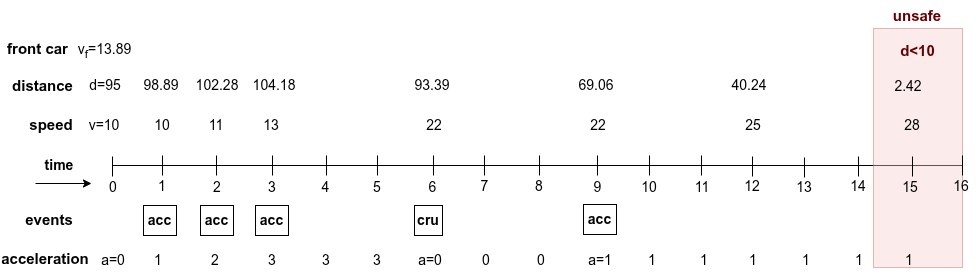}
    \caption{Execution of the input events stream shown in Figure~\ref{fig:event-stream} without an enforcer. \emph{distance}, \emph{speed}, and \emph{acceleration} show the value of the variables $d$, $v$, and $a$, respectively, at different time points throughout the execution. $v_f$ represents the constant speed of the front car. The \emph{events} below the time horizon show the application point of the actions, which leads to a safety violation. {N.B., \emph{stutter} events are not shown in figure.}}
    \label{fig:unsafe-run}
    \vspace{-20pt}
\end{figure}

%To avoid the above-described unsafe scenario, we want to design a runtime enforcer that will act like an added security layer between the car and the ACC system, as shown in Figure~\ref{fig:car-enforcer}. For the monitoring property "\textit{The distance from the front car should be greater than 10}", we design a hybrid automaton $\mathcal{H}$ (see Figure~\ref{fig:monitor-ha}) that has 5 locations along with an unsafe trap location which is not shown in the figure. From each of these locations in the figure, the control can make a transition to the unsafe location if $d<10$ (the distance from the front car).
To prevent the unsafe scenario described above, we introduce a runtime enforcer that functions as an additional safety layer between the vehicle and the ACC system (see Figure~\ref{fig:car-enforcer}). To monitor the safety requirement—“the distance to the front vehicle must remain greater than 10 meters”—we construct a safety hybrid automaton $\mathcal{S}$ (shown in Figure~\ref{fig:monitor-ha}) based on the observed environmental inputs. The automaton consists of five operational locations, along with an implicit unsafe trap location (not shown in the figure). From any of the depicted locations, the system transitions to this unsafe state whenever the condition $d < 10$ is violated, where $d$ denotes the distance to the front vehicle.
This design enables continuous monitoring of the safety constraint and immediate detection of unsafe conditions during execution.

\begin{figure}[htbp]
    \captionsetup{font=scriptsize}
    \vspace{-20pt}
    \centering
    \includegraphics[width=0.5\textwidth]{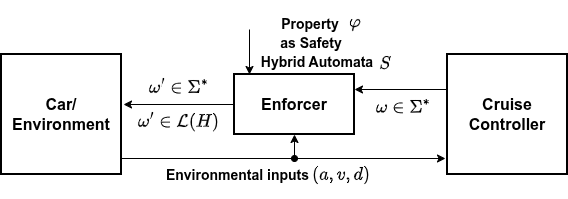}
    \caption{The enforcer for the ACC system takes in an output event stream from the controller, and produces a modified/enforced output event stream for the environment that always satisfies the property $\varphi$.}
    \label{fig:car-enforcer}
    \vspace{-20pt}
\end{figure}
%\sr{Input to enforcer is $w\in \Sigma^*$, and output is $w'\in L(H)$?}

%\noindent 
%The enforcer observes the current speed $v$ and the distance $d$ of the car. For a received events stream from the controller, it modifies (by suppressing/delaying/inserting) events to satisfy the monitoring property.
By observing the vehicle's speed $v$, acceleration $a$, and following distance $d$, the enforcer evaluates the controller’s event stream in real-time. It then transforms this stream by suppressing, delaying, or inserting events as needed—to guarantee that the system's behavior satisfies safe execution in $\mathcal{S}$.
%\sr{TO discuss reg how to present possible enforcement actions: suppressing, delaying, or inserting events... }

In \cite{DBLP:journals/tecs/PinisettyRSATH17}, a runtime enforcement framework based on discrete timed automata (DTA)~\cite{DBLP:conf/charme/BozgaMT99} is proposed. There, property specifications are expressed as DTA, and enforcement is achieved by observing events at discrete time points (ticks) and instantaneously editing them when necessary to produce a corresponding event in the same tick. In contrast, by modeling properties as hybrid automata, our framework enables continuous monitoring between clock ticks. Consequently, it can delay or insert events at arbitrary time instants, providing finer-grained control and ensuring property satisfaction in real time.

%\todo[inline]{Compare with DTA work which only suppresses or delays an event, and what additionally we bring.}

\begin{figure}[htbp]
    \captionsetup{font=scriptsize}
    \vspace{-20pt}
    \centering
    \includegraphics[width=0.70\textwidth]{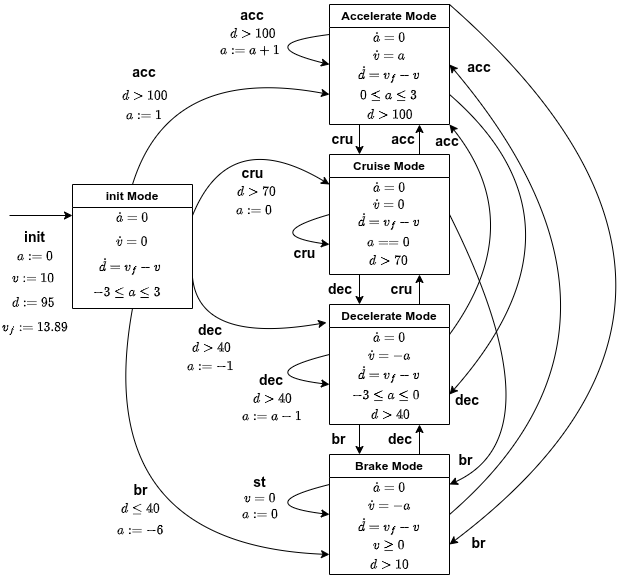}
    \caption{The monitoring property $\varphi$ as Safety Hybrid Automaton $\mathcal{S}$. The non-accepting trap location is not shown in the figure due to space constraints.}
    \label{fig:monitor-ha}
    \vspace{-10pt}
\end{figure}

\begin{exmp}
Consider the input event stream illustrated in Figure~\ref{fig:event-stream} and the corresponding enforced sequence in Figure~\ref{fig:modified-events}. Initially, the enforcer resides in the $initMode$ location (Figure~\ref{fig:monitor-ha}). Given the initial environment state $(d=95, v=10, a=0)$, the first controller event $(1.0, acc)$ cannot be executed in $\mathcal{S}$ because the distance is below the required acceleration threshold. Consequently, the enforcer {emits a stutter event at $t=1$} and \textbf{delays} this action until $t=1.29$,
%\rr{why? there are other valid choices of t as well. Can enforcer choose any?}
triggering a transition to $AccelerateMode$.The subsequent events, $(2.0, acc)$ and $(3.0, acc)$, are accepted without modification as the distance remains $d > 100$, satisfying the invariant for $AccelerateMode$. However, at approximately $t=5.22$, the distance drops below $100$ meters, violating the current location's invariant. Since no controller event is available at this time, the enforcer \textbf{inserts} $(5.22, cru)$ to transition the system into $CruiseMode$.
%\rr{What if the PHA had no outgoing transition?}.
At $t=6.0$, the controller’s $(6.0, cru)$ event is permitted as the system is already in $CruiseMode$ with $d=95.64$. Later, at $t=9.0$, the enforcer \textbf{suppresses} the received $(9.0, acc)$ event {(emitting a stutter event instead)}, as acceleration is disallowed at $d=79.17$. Finally, at $t \approx 10.67$, the enforcer \textbf{inserts} $(10.67, dec)$ to mitigate an impending invariant violation.
%\rr{How does the enforcer know at that time to insert the dec action?}. 
This synthesized event sequence ensures the safety property $d > 10$ is strictly maintained throughout the execution. {N.B., \emph{stutter} events are not shown in the figure due to space constraints.}
\end{exmp}

\begin{figure}[htbp]
    \captionsetup{font=scriptsize}
    \vspace{-20pt}
    \centering
    \includegraphics[width=0.85\textwidth]{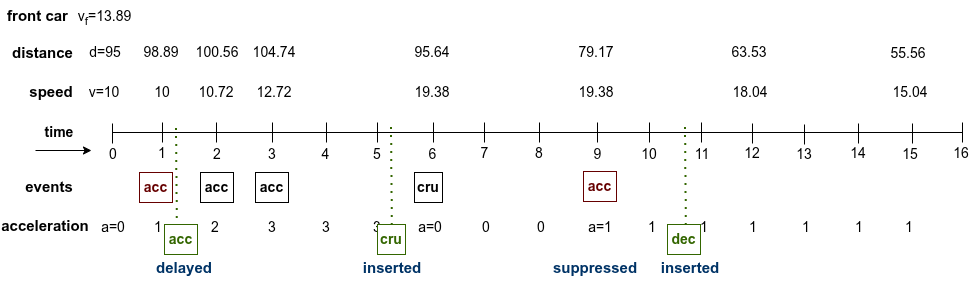}
    \caption{Output of the enforcer. It modified the event stream by suppressing/delaying/inserting events to satisfy the property $\varphi$. N.B. $stutter$ events are not shown in the figure due to space constraints.}
    \label{fig:modified-events}
    \vspace{-20pt}
\end{figure}

% \begin{figure}[htbp]
%     %\captionsetup{font=scriptsize}
%     \centering
%     \includegraphics[width=0.85\textwidth]{images/modified-run-case.drawio.png}
%     \caption{Output of the enforcer. It modified the events stream by suppressing/delaying/inserting events so that the monitoring property is satisfied.}
%     \label{fig:modified-events}
% \end{figure}
\section{Preliminaries and Notations}

%\sr{Should this part be in the beginning of prelim section?}
Let $\Sigma$ be a finite alphabet of actions. A finite word over $\Sigma$ is a finite sequence $\sigma = a_1 a_2 \ldots a_n$ where each $a_i \in \Sigma$. The sets of all finite words over $\Sigma$ are denoted by $\Sigma^*$.
%Let $\Sigma$ be a finite alphabet of actions. A finite (respectively, infinite) word over $\Sigma$ is a finite sequence $\sigma = a_1 a_2 \ldots a_n$ (respectively, an infinite sequence $\sigma = a_1 a_2 \ldots$) where each $a_i \in \Sigma$. The sets of all finite and infinite words over $\Sigma$ are denoted by $\Sigma^*$ and $\Sigma^\omega$, respectively.
The length of a finite word $\sigma$ is denoted by $|\sigma|$. The empty word is denoted by $\epsilon$.
For words $\sigma$ and $\sigma'$, their concatenation is denoted by $\sigma \cdot \sigma'$ where "$\cdot$" represents the usual concatenation of words. A word $\sigma'$ is a prefix of $\sigma$, written $\sigma' \preceq \sigma$, if there exists a word $\sigma'' \in \Sigma^*$ such that $\sigma = \sigma' \cdot \sigma''$. In this case, $\sigma$ is said to extend $\sigma'$. 

\noindent We now discuss the background concepts of hybrid systems, which exhibit an interplay of discrete and continuous dynamics. Hybrid automata are a well-known mathematical model for such systems \cite{ALUR19953,DBLP:conf/hybrid/AlurCHH92}. We first define a hybrid automaton model of a hybrid system.
%\textcolor{black}{We need to partition the state spaces into locations, or "symbolic states", which represent a hybrid system's control modes. A location, combined with the values of its continuous variables, fully defines the hybrid system's complete state.
%We first define a hybrid automaton model that facilitates the partitioning of the state space of a hybrid system.
%We begin by defining a hybrid automaton model that helps us partition the hybrid system's state space.}

\begin{definition} [Hybrid Automata] \label{def:HA} \cite{ALUR19953}
A hybrid automaton (HA) is a seven-tuple HA=\big(\emph{Loc}, \emph{Var}, \emph{Flow}, \emph{Init}, \emph{$\Sigma$}, \emph{Edge}, \emph{Inv}\big) where:
\begin{itemize}
    \item \emph{Loc} is a finite set of vertices called locations.
    \item \emph{Var} is a finite set of real-valued variables. A valuation $v$ is an assignment of a real value to each variable $x \in$ \emph{Var}. We write $V$ for the set of all valuations.
    \item 
    %\textcolor{black}{\emph{Flow} is a mapping from each location $l \in \emph{Loc}$ to a set of differential equations $\{ \dot{x} \in f(l), \forall x \in \emph{Var} \}$, where $\dot{x}$ denotes the rate of change of variable $x$.}
    \emph{Flow} is a mapping from each location $l \in \emph{Loc}$ to a set of differential equations $\{\dot{x} \in f(x_1,\ldots,x_{|Var|}) \mid x \in \emph{Var}\}$, where $\dot{x}$ denotes the rate of change of variable $x$.
    \item \emph{Init} is a tuple $\langle l_{0}, S \rangle$ such that $l_{0} \in \emph{Loc}$ and $S \subseteq V$.
    \item \emph{$\Sigma$} is a finite set of labels or actions {(including a stutter action $d$)}.
    \item \emph{Edge} is a finite set of transitions $e = (l, a, g, r, l')$, each consisting of a source location $l \in \emph{Loc}$, a target location $l'\in \emph{Loc}$, a label $a\in$ \emph{$\Sigma$}, a guard $g \subseteq V$ and a reset map $r: \mathbb{R}^{|Var|} \to 2 ^ {\mathbb{R}^{|Var|}}$.
    \item \emph{Inv} is a mapping from each location $l\in \emph{Loc}$ to a subset of valuations $V$. $\Box$
\end{itemize}
\end{definition}

%\rr{Use terms from the literature. These are called \emph{stutter} transitions. Please check whether stutter transitions have no reset, or whether they have identity reset maps. no guard or it is $\mathbb{R}^{|Var|}$. The definition uses set-theoretic representations for guards and invariants, and functions for reset maps. We need to be consistent in describing the \emph{stutter} transitions.}
\begin{rmark}
    Stutter actions are modeled as stutter transitions. A stutter transition is defined by the tuple $e = (l, d, g, r, l)$, where the guard $g = \mathbb{R}^{|Var|}$ and the reset $r$ is an identity mapping. A stutter transition for every location is implicitly a member of \emph{Edge}.
\end{rmark}
%\rr{The acceptance condition (see Muller acceptance condition - Page 6 of [3]) followed by the definition of a language of a hybrid automaton must be mentioned, with citation.}

%\noindent \textbf{Parametric hybrid automata (PHA)~\cite{DBLP:conf/itsc/ChaiWLLH19} extend standard hybrid automata by introducing parametric notations. Within the PHA framework, flow equations, transition guards, and invariants can be defined using parametric expressions, allowing the model to represent variables with initially unknown values. For instance, in the thermostat model~\cite{DBLP:conf/lics/Henzinger96} as shown in Figure~\ref{fig:thermostat}, the initial condition might be defined as $x = k_1$ and a transition guard as $x \geq 2k_2$, where $k_1$ and $k_2$ serve as parameters without fixed values in the base model. A parameter is a special variable and is distinct from system variables. The flow rate of a parameter is 0 at all locations in the automaton.

Consider the thermostat model~\cite{DBLP:conf/lics/Henzinger96} illustrated in Figure~\ref{fig:thermostat}, which serves as a foundational example of a hybrid automaton. The model features two discrete locations (or modes of operation) within which the temperature variables $x$ evolve continuously. The system executes a discrete transition between these locations whenever the corresponding guard condition is satisfied.

% \begin{figure}[htbp]
%     %\captionsetup{font=scriptsize}
%     \centering
%     \includegraphics[width=0.5\textwidth]{images/thermostat1.drawio.png}
%     \caption{A thermostat automaton.}
%     \label{fig:thermostat}
% \end{figure}

A state of a HA is defined as a pair $(l, v)$, where $l \in Loc$ represents a discrete location and $v \in \textit{Inv}(l)$ denotes a valuation of the continuous variables. The component $\textit{Init}$ specifies the set of initial configurations.
A hybrid automaton of dimension $n$ (number of variables) is an infinite-state machine whose state has a discrete part, which ranges over the locations in $Loc$, and a continuous part (valuation of $v$), which ranges over the $n$-dimensional Euclidean space $\mathbb{R}^n$~\cite{DBLP:conf/hybrid/AlurCHH92}.
A state can change either due to the application of a transition or due to the passage of time following the continuous dynamics. The continuous state change is given by $flow_{l}(v, t)$, which is the solution of the differential equation Flow($l$). We refer to the former as \emph{discrete-transition} and the latter as \emph{timed-transition}. A behaviour of the hybrid automaton is defined by a \emph{run}, which is a sequence of these discrete-transition and timed-transition steps. We define run as follows:
%\rr{Do we need this definition in this paper?}
%The discrete dynamics of a hybrid automaton can be isolated by extracting its underlying transition graph, effectively abstracting away the continuous evolution associated with each state. We define the abstract graph as follows:

% \begin{definition} [Graph]
%     \textcolor{black}{The graph of a HA is defined as $\mathcal{G_{H}} = (V$, $E)$, where $V$ = \emph{Loc} and $E \subseteq \emph{Loc} \times \Sigma \times \emph{Loc}$ such that for every $(l, a, g, r, l') \in \emph{Edge}$, there is an edge $(l\xrightarrow{a} l') \in E$ where $\{l,l'\}\in \emph{Loc}$ and $a \in \Sigma$. $\Box$}
% \end{definition}

%\noindent A hybrid automaton of dimension $n$ (number of variables) is an infinite-state machine whose state has a discrete part, which ranges over the vertices of the graph $\mathcal{G}$, and a continuous part (valuation of $v$), which ranges over the $n$-dimensional Euclidean space $\mathbb{R}^n$~\cite{DBLP:conf/hybrid/AlurCHH92}.
%A state can change either due to the application of a transition or due to the passage of time following the continuous dynamics. The continuous state change is given by $flow_{l}(v, t)$, which is the solution of the differential equation Flow($l$). We refer to the former as \emph{discrete-transition} and the latter as \emph{timed-transition}. A behaviour of the hybrid automaton is defined by a \emph{run}, which is a sequence of these discrete-transition and timed-transition steps. We define run as follows:

\begin{definition}[Run] \label{def:run}
A \emph{run} of a HA is an alternating sequence of timed-transitions and discrete-transitions of the hybrid automaton depicted as: $(l_{0},v_{0}) \xra{\tau_0}(l_{0},v'_0) \xrightarrow{a_1} (l_1,v_{1}) \xra{\tau_1} (l_1,v'_{1})\xrightarrow{a_2} \ldots \xrightarrow{a_{n}}(l_{n},v_n) \xra{\tau_{n}} (l_{n},v'_{n})$,
% \begin{align*}
% (l_{0},v_{0}) &\xra{\tau_0}(l_{0},v'_0) \xrightarrow{a_1} (l_1,v_{1}) \xra{\tau_1} (l_1,v'_{1})\xrightarrow{a_2} \ldots \xrightarrow{a_{n}}(l_{n},v_n) \xra{\tau_{n}} (l_{n},v'_{n})
% \end{align*}
where
(i) $(l_{0},v_{0}) \in \emph{Init}$.
(ii) Transitions labeled by $a_i$ represent discrete transitions, where $a_i$ denotes the label (or action) of an edge $e_i \in \emph{Edge}$. For each $i \in \llbracket 1..n \rrbracket$, the location $l_{i-1}$ is the source and $l_i$ is its destination of $e_i$. Moreover, the valuation $v'_{i-1}$ satisfies the guard $g$, and the valuation $v_i$ results from applying the reset map $r$ associated with $e_i$, i.e., $v'_{i-1} \in g$ and $v_i \in r$.
(iii) Transitions labeled by $\tau_i \in \mathbb{R}^{\geq 0}$ correspond to timed transitions, where $\tau_i$ denotes the dwelling time spent in location $l_i$. They satisfy the invariant condition that, for all $t \in [0,\tau_i]$, the continuous evolution $(l_i, v_i) \xra{t} (l_i, v_i^+)$ remains within the invariant of the location, i.e., $v_i^+ \in \mathrm{Inv}(l_i)$, for all $i \in \llbracket 0..n \rrbracket$.
(iv) $\langle t_i, a_i \rangle$ is a pair where $t_i = \sum_{j=0}^{i-1} {\tau_j}$, $\forall i \in \llbracket 1..n\rrbracket$. $t_i$ is being the time-stamp at which $a_i$ is executed.
%The transitions labeled $\tau_i \in \textcolor{black}{\mathbb{R}^{\geq 0}}$ represent timed transitions with $\tau_i$ being the time of dwelling in the location, with the constraint that $\forall t \in [0, \tau_i]$, the timed transition $(l_i,v_{i}) \xrightarrow{t} (l_i,v^+_{i})$ has $v^+_{i} \in Inv(l_i)$, $\forall i \in \llbracket 0..n\rrbracket$. %The constraint says that a state resulting from any timed transition for a duration less than or equal to $\tau_i$ must satisfy the invariant of the location.
$\Box$
\end{definition}

% \begin{definition} [Projection of Run on Graph]
% A projection of a run $r$ on the graph $\mathcal{G}_H$ of a hybrid automaton HA is a timed word $\sigma =$ $(t_1,a_1).(t_2,a_2)\cdots (tn,a_n)$ , where for each discrete-transition $(l_{i-1},v_{i-1})\xrightarrow{a_i} (l_i,v_{i})$ $\in r$, there is an edge $(l\xrightarrow{a} l') \in E$ such that $\{l,l'\}\in \emph{Loc}$ and {$a_i \in \Sigma$}, where $t_i = \sum_{j=0}^{i-1} {\tau_j}$, $\forall i \in \llbracket 1..n\rrbracket$. $\Box$
% \end{definition}

% \begin{definition} [Projection of a Run]
%    Let $r = (l_{0},v_{0}) \xra{\tau_0}(l_{0},v'_0) \xrightarrow{a_1} (l_1,v_{1}) \xra{\tau_1} (l_1,v'_{1})\xrightarrow{a_2} \ldots \xrightarrow{a_{n}}(l_{n},v_n) \xra{\tau_{n}} (l_{n},v'_{n})$ be a run of a HA. The projection of $r$ onto the graph $\mathcal{G}_H$ is a timed word $\sigma = (t_1, a_1), (t_2, a_2), \dots, (t_n, a_n)$, where:
%    (i) For each discrete transition $(l_{i-1}, v'_{i-1}) \xrightarrow{a_i} (l_i, v_i)$ in $r$, there exists an edge $e = (l_{i-1}, a_i, l_i) \in E$ in the graph $\mathcal{G}_H$, such that $\{l_{i-1},l_i\}\in \emph{Loc}$ and {$a_i \in \Sigma$}.
%    (ii) The timestamp $t_i$ represents the cumulative time elapsed until the occurrence of the $i$-th discrete transition, defined as $t_i = \sum_{j=0}^{i-1} \tau_j$ for all $i \in \llbracket 1..n\rrbracket$. $\Box$
% \end{definition}

%\sr{We can just write this as $a_i \in \Sigma$??}

\subsection{Languages and Properties as Hybrid Automata}

Let $\mathbb{R}^{\geq 0}$ denote the set of non-negative real numbers, and $\Sigma$ a finite alphabet of actions. A pair ($t, a$) $\in$ ($\mathbb{R}^{\geq 0}\times \Sigma$) is called an \emph{event}, where $t \in \mathbb{R}^{\geq 0}$ is the time-stamp at which the event appears.
%A \emph{timed-word} over $\Sigma$ is a finite (respectively, infinite) sequence of events over $(\mathbb{R}^{\geq 0}\times \Sigma)^*$ (respectively, $(\mathbb{R}^{\geq 0}\times \Sigma)^\omega$).
A \emph{timed-word} over $\Sigma$ is a finite sequence of events over $(\mathbb{R}^{\geq 0}\times \Sigma)^*$.
%\sr{should be $(\mathbb{R}^{\geq 0}\times \Sigma)^\omega$) ?}.
We consider a timed-word $\sigma$ = ($t_1 , a_1$)($t_2, a_2$)$\cdots$($t_n, a_n$). For $i\in[1, n]$, $t_i$ is the time-stamp of the corresponding event, and $t_1$ is the time elapsed before the first action $a_1$.
Note that even though the alphabet ($\mathbb{R}^{\geq 0}\times\Sigma$) is infinite in this case, the notations defined above (related to length, concatenation, prefix, etc.) naturally extend to timed words.
A \emph{timed-language} is any set $\mathcal{L}$ $\subseteq$ $(\mathbb{R}^{\geq 0}\times \Sigma)^*$.
%\sr{should be $(\mathbb{R}^{\geq 0}\times \Sigma)^*$) ?}.
%Respectively, for the infinite sequence of events $\mathcal{L}^\omega$ $\subseteq$ $(\mathbb{R}^{\geq 0}\times \Sigma)^\omega$ denotes the language of infinite timed-words.

\begin{definition} [Projection of a Run on a Timed word]
   Let $r = (l_{0},v_{0}) \xra{\tau_0}(l_{0},v'_0) \xrightarrow{a_1} (l_1,v_{1}) \xra{\tau_1} (l_1,v'_{1})\xrightarrow{a_2} \ldots \xrightarrow{a_{n}}(l_{n},v_n) \xra{\tau_{n}} (l_{n},v'_{n})$ be a run of a HA. The projection of $r$ onto a timed word $\sigma = (t_1, a_1), (t_2, a_2), \dots, (t_n, a_n)$, where: for each discrete transition $(l_{i-1}, v'_{i-1}) \xrightarrow{a_i} (l_i, v_i)$ in $r$, there exists an event $e = (t_i,a_i)$, such that {$a_i \in \Sigma$} and the timestamp $t_i$ represents the cumulative time elapsed until the occurrence of the $i$-th discrete transition, defined as $t_i = \sum_{j=0}^{i-1} \tau_j$ for all $i \in \llbracket 1..n\rrbracket$. $\Box$
\end{definition}

%A trace of a run $r$ in a hybrid automaton is defined as a timed word $(\delta_1, a_1)$, $(\delta_2, a_2)$, $\cdots$, $(\delta_n, a_n)$. For each $i \in [1, n]$, the time-stamp $\delta_i = \sum_{j=0}^{i-1} \tau_j$ represents the absolute time at which the discrete action $a_i$ occurs, where $\tau_j$ is the dwell-time spent in location $l_j$ (refer to Def.~\ref{def:run}). Conceptually, a trace is a projection of the run $r$ onto the underlying graph $\mathcal{G}_H$ of the hybrid automaton HA, augmented with the specific time-stamps of the discrete transitions.

\noindent In this work, we consider finite runs of HA. A Muller acceptance condition of HA is given by a collection $\mathcal{F} \subseteq 2^{Loc}$ of acceptance sets of locations \cite{DBLP:conf/hybrid/AlurCHH92}. A finite \emph{run} with final location $l_n$ is accepting when $\{l_n\} \in \mathcal{F}$. We define the language of a HA as follows:
% or $r_{\infty}\in \mathcal{F}$ for the set $r_{\infty}$ of locations that are visited infinitely often during $r$ (i.e, $r_{\infty}$ is the set \{$l$ $|$ $l=l_i$ for infinitely many $i\geq 0$\}). 

\begin{definition} [Language of Hybrid Automata] \label{def:langHA}
    The language accepted by a HA, denoted by $\mathcal{L}$(HA), is the set of all timed words generated by accepting runs of HA$:$
    $\mathcal{L}$(HA) $=$ $\{\sigma \mid \sigma$ is generated by an accepting $run$ of HA, $\sigma \in (\mathbb{R}^{\geq 0} \times \Sigma)^*\}$.
\end{definition}

%\rr{Does not seem right to call this a subclass of HA. It is an example of HA. For instance, we don't refer to DFAs with a sink state as a subclass of DFAs.}
%\noindent \textbf{Safety Hybrid Automata} 
\noindent In this paper, we focus on prefix-closed properties. We formally express these properties as \emph{safety hybrid automata}. A safety hybrid automaton is an HA with a sink location \emph{Sink} and a transition $\textit{fail}$ from each HA location to the \emph{Sink}.
%whenever a monitoring property is violated \sr{Just say that it is a unique non-accepting/trap location. We do not need to talk about monitoring?}. 
\emph{Sink} is a unique non-accepting/trap location where \emph{Flow} of the variables are zero, and the invariant set is \emph{true}. We define a safety hybrid automaton below: 
%\sr{does it extend? Perhaps is a sub-class?}
\begin{definition} [Safety Hybrid Automata]
    A safety hybrid automaton $\mathcal{S}$ = \big(\emph{Loc}$_\mathcal{S}$, \emph{Var}, \emph{Flow}, \emph{Init}, \emph{$\Sigma$}$_\mathcal{S}$, \emph{Edge}, \emph{Inv}, \emph{Sink}\big) is an extension of a hybrid automaton HA with the following changes: (i) \emph{Loc}$_\mathcal{S}$ = \emph{Loc} $\cup$ $\{\emph{Sink}\}$, (ii) \emph{$\Sigma$}$_\mathcal{S}$ = \emph{$\Sigma$} $\cup$ $\{fail\}$, and (iii) All other components are same as HA. (iv) All locations in \emph{Loc}$_\mathcal{S}$ except \emph{Sink} (i.e., \emph{Loc}$_\mathcal{S}\setminus \{\emph{Sink}\}$) are accepting locations. \emph{Sink} is a unique non-accepting (trap) location, and there are no outgoing transitions in \emph{Edge} from \emph{Sink} to a location in \emph{Loc}$_\mathcal{S}\setminus \{\emph{Sink}\}$.
    $\Box$
\end{definition}

\noindent Consider the thermostat automaton in Figure~\ref{fig:thermostat}, a property P such as \textit{"the temperature should always remain between 15 and 25"} can be represented as a safety hybrid automaton as shown in Figure~\ref{fig:safety-thermostat} where \textit{Fail-mode} represents a \emph{Sink} location, and the violation of the property are shown as red transitions. A state $(l,v)$ in $\mathcal{S}$ is an accepting state if $l$ is an accepting location and $v\in$ Inv($l$).
\begin{figure}[htbp]
    \captionsetup{font=scriptsize}
    \vspace{-10pt}
    \centering
    \begin{subfigure}[b]{0.49\textwidth}
        \centering
        \includegraphics[width=0.9\textwidth]{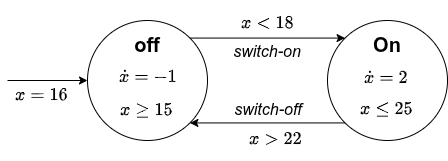}
        \caption{\scriptsize{A thermostat automaton.}}
        \label{fig:thermostat}
    \end{subfigure}
    \hfill
    \begin{subfigure}[b]{0.49\textwidth}
        \centering
        \includegraphics[width=0.9\textwidth]{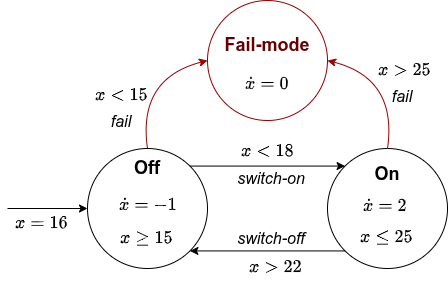}
        \caption{\scriptsize{A safety automaton for the property P.}}
        \label{fig:safety-thermostat}
    \end{subfigure}
    \caption{Thermostat automaton and corresponding safety thermostat automaton. }
    \label{fig:thermostat-example}
    \vspace{-10pt}
\end{figure}

% \begin{figure}[htbp]
%     %\captionsetup{font=scriptsize}
%     \centering
%     \includegraphics[width=0.5\textwidth]{images/safety-thermostat.drawio.png}
%     \caption{A parametric safety hybrid automaton for the property P.}
%     \label{fig:safety-thermostat}
% \end{figure}

%\todo[inline]{Example property defined as Safty Hybrid Automata}

\noindent\textbf{Language of Safety Hybrid Automata} The language accepted by a safety hybrid automaton $\mathcal{S}$ is a timed-language $\mathcal{L}(\mathcal{S}) \subseteq (\mathbb{R}^{\geq 0}\times \Sigma_{\mathcal{S}})^*$. %For a finite sequence of events $\mathcal{L}(\mathcal{S}) \subseteq (\mathbb{R}^{\geq 0}\times \Sigma_{\mathcal{S}})^*$.
\section{{Runtime Enforcement Problem}}

In this section, we formalize the RE problem for "black box" reactive systems {for properties specified as \emph{safety hybrid automata}}.
%\sr{for properties specified as ...}.
We assume that the system interacts with the environment through a special function call called \emph{ptick}, which is invoked exactly once during each reactive step. In each \emph{ptick}, the system receives inputs from the environment and outputs an event in $(\mathbb{R}^{\geq 0} \times \Sigma) $. These events are either a real event or a stutter {(consisting of a stutter action $d$)} if they are a continuation of a previous event.
We consider a proactive enforcement monitor that observes the inputs from the environment in each \emph{ptick} and corrects the outputs of the system whenever necessary. For a given safety property $\varphi \subseteq (\mathbb{R}^{\geq 0} \times \Sigma)^*$ specified as a \emph{safety hybrid automata}, we want to synthesize an enforcer.
%\red{The enforcer observes inputs from the environment to initially set the parameters of the automaton. Additionally, it also detects any abrupt changes in the inputs to identify any external attack on the system.}
%\rr{The introduction/motivating example gave an impression that we are proposing a continuous time monitoring and enforcement. Observation at pticks implies discrete time monitoring/enforcement.}
\subsection{{Problem Definition}}
%\sr{\color{magenta}{Sec title can be "Problem Definition"?}}
An enforcer for a property $\varphi \subseteq (\mathbb{R}^{\geq 0} \times \Sigma)^*$ can only edit the output sequence of events by suppressing, delaying, or inserting events whenever necessary. While suppressing and delaying an event in a \emph{ptick} is a reactive trait, the enforcer can also proactively insert an event before any \emph{ptick} (in the nick of time~\cite{DBLP:journals/jcs/BasinDH24}) whenever necessary if SuM does not produce an event on its own.
%The interval $\Delta$ between consecutive \emph{pticks} must satisfy the real-time constraint $\Delta > \delta_S + \delta_{S\leftrightarrow E} + \delta_E$~\cite{DBLP:conf/cav/HubletLBKT24}. Here, $\delta_S$ denotes the worst-case execution time required by the system to generate events after observing inputs from the environment. The term $\delta_{S\leftrightarrow E}$ represents the worst-case communication delay between the SuM and the enforcer $E_{\varphi}$, while $\delta_E$ accounts for the enforcer's maximum computational latency. This constraint ensures that the enforcement loop completes within a single control cycle.
At each \emph{ptick} $t \in \mathbb{N}\cup\{0\}$, the enforcer receives an event (a continuation of the incoming timed-word $\sigma \in (\mathbb{R}^{\geq 0} \times \Sigma)^*$) as input while observing the current environmental state $\mathcal{I}$ ($\mathcal{I}$ ranges over $\mathbb{R}^{|n|}$, where $n$ is number of variables), and monitor the property until next \emph{ptick} $t'$. It generates one or more events (a continuation of rectified output timed-word $\sigma' \in (\mathbb{R}^{\geq 0} \times \Sigma)^*$), such that the output satisfies the safety property (i.e., $\sigma' \models \varphi$) for all the time points $\tau \in [t,t']$.

At an abstract level an enforcer can be considered as a function that takes a timed word as input and emits a timed word. An enforcer for any given property $\varphi$ should satisfy the following constraints:
%We define the enforcement function as follows:

%\sr{An enforcer for any given property $\varphi$ should satisfy the following constraints:}
%At a \emph{ptick} $t \in \mathbb{N}\cup\{0\}$, input to the enforcer is a timed-word $\sigma \in (\mathbb{R}^{\geq 0} \times \Sigma^*)$, while it also observes the environmental inputs $\mathcal{I}$ and output is a timed-word $\sigma' \in (\mathbb{R}^{\geq 0} \times \Sigma^*)$, where $\sigma' \models \varphi$. 

%\sr{In addition to the timed word (input stream, enforcer also takes 3 other parameters as input. Mention and explain what are the other parameters it takes as input..}

%\sr{$\Sigma$ denotes only the set of output events? and $\mathcal{I} $ denotes inputs? Should it also be a stream of inputs?}

%\sr{Should it be $E_{\varphi}(\sigma,t,\tau,\mathcal{I})$: $(\mathbb{R}^{\geq 0} \times \Sigma)^* \times  \mathcal{I}^* \times (\mathbb{N}\cup\{0\}) \times \mathbb{R}^{\geq 0}  \to (\mathbb{R}^{\geq 0} \times \Sigma)^*$ ?}

\begin{definition} [Enforcer for $\varphi$] \label{def:enforcer}
 Given a property $\varphi \subseteq (\mathbb{R}^{\geq 0} \times \Sigma)^*$, an enforcer for $\varphi$ is a function {$E_{\varphi}(\sigma,t,\tau,\mathcal{I})$: $(\mathbb{R}^{\geq 0} \times \Sigma)^* \times (\mathbb{N}\cup\{0\}) \times \mathbb{R}^{\geq 0} \times \mathbb{R}^{|n|} \to (\mathbb{R}^{\geq 0} \times \Sigma)^*$} satisfying the following constraints:

 \begin{itemize}
     \item \textbf{\emph{(Soundness)}} $\forall \sigma \in (\mathbb{R}^{\geq 0} \times \Sigma)^*$, $\forall t,t' \in (\mathbb{N}\cup\{0\})$, $t\leq t'$, $\forall \tau \in [t,t'):$ $E_{\varphi}(\sigma,t,\tau,\mathcal{I}) \models \varphi$, where $\tau$ is a real-time between $t$ and $t'$.
     %\sr{exclude $t'$ in the interval}
     % \begin{align*}
     %     & \forall \sigma \in (\mathbb{R}^{\geq 0} \times \Sigma)^*,\ \forall t,t' \in (\mathbb{N}\cup\{0\}),\ t\leq t',\ \forall \tau \in [t,t'):\\
     %     & E_{\varphi}(\sigma,t,\tau,\mathcal{I}) \models \varphi.
     % \end{align*} where $\tau$ is a real-time between $t$ and $t'$.
 \end{itemize}
 
 %E_{\varphi}(\sigma,t,\tau,\mathcal{I}) = \sigma' \implies \sigma' \models \varphi, |\sigma'| \geq |\sigma|.
{
 \begin{itemize}
     \item \textbf{\emph{(Instantaneity)}} $\forall \sigma \in (\mathbb{R}^{\geq 0} \times \Sigma)^*$, $\forall t \in (\mathbb{N}\cup\{0\})$, $\forall \tau \in [t,t+1):$\\
     $E_{\varphi}(\sigma,t,\tau,\mathcal{I}) \geq |\sigma|$
     % \begin{align*}
     %    & \forall \sigma \in (\mathbb{R}^{\geq 0} \times \Sigma)^*, \forall t \in (\mathbb{N}\cup\{0\}),\ \forall \tau \in [t,t+1):\\
     %    & E_{\varphi}(\sigma,t,\tau,\mathcal{I}) \geq |\sigma|
     %    % & \forall \sigma, \sigma' \in (\mathbb{R}^{\geq 0} \times \Sigma)^*, E_{\varphi}(\sigma,t,\tau,\mathcal{I}) = \sigma', \ \forall t \in (\mathbb{N}\cup\{0\}),\ \forall \tau \in [t,t+1):\\
     %    % &\forall\ e(t_1,a)\in \sigma,\ \exists\ e'(t_2,a') \in \sigma'
     % \end{align*} %where $a, a' \in \Sigma$, $t_1,t_2 \in \mathbb{R}^{\geq 0}$, and $t_2\geq t_1$.
 \end{itemize}
 }

 \begin{itemize}
     \item \textbf{\emph{(Monotonicity)}} $\forall \sigma, \sigma' \in (\mathbb{R}^{\geq 0} \times \Sigma)^*$, $\forall t,t' \in (\mathbb{N}\cup\{0\}):$\\
     $\sigma \preceq \sigma'$, $t\leq t'$ $\implies$ $E_{\varphi}(\sigma,t,\tau,\mathcal{I})$ $\preceq$ $E_{\varphi}(\sigma',t',\tau',\mathcal{I}')$, where $\mathcal{I}$ and $\mathcal{I}'$ are environmental inputs at $t$ and $t'$, respectively. $\tau \in [t,t')$ and $\tau' \in [t',t'+1)$.
     % \begin{align*}
     %     & \forall \sigma, \sigma' \in (\mathbb{R}^{\geq 0} \times \Sigma)^*, \forall t,t' \in (\mathbb{N}\cup\{0\}):\\
     %     & \sigma \preceq \sigma', t\leq t' \implies E_{\varphi}(\sigma,t,\tau,\mathcal{I}) \preceq E_{\varphi}(\sigma',t',\tau',\mathcal{I}').
     % \end{align*} where $\mathcal{I}$ and $\mathcal{I}'$ are environmental inputs at $t$ and $t'$, respectively. $\tau \in [t,t')$ and $\tau' \in [t',t'+1)$.
 \end{itemize}
 
 \begin{itemize}
     \item \textbf{\emph{(Transparency)}} (i) $\forall \sigma \in (\mathbb{R}^{\geq 0} \times \Sigma)^*$, $\forall e \in (\mathbb{R}^{\geq 0} \times \Sigma )$, $\forall t, t' \in (\mathbb{N}\cup\{0\})$, $t \leq t'$, $\forall \tau \in [t,t'):$ $E_{\varphi}(\sigma,t,\tau,\mathcal{I})\cdot e \models \varphi$ $\implies$ $E_{\varphi}(\sigma\cdot e,t',\tau',\mathcal{I}')$ $=$ $E_{\varphi}(\sigma,t,\tau,\mathcal{I})\cdot e$, where $\tau'\in[t',t'+1)$.
       (ii) $\forall \sigma \in (\mathbb{R}^{\geq 0} \times \Sigma)^*$, $\forall t, t' \in (\mathbb{N}\cup\{0\})$, $t \leq t'$, $\forall \tau \in [t,t'):$ $E_{\varphi}(\sigma,t,\tau,\mathcal{I}) \models \varphi$ $\implies$ $E_{\varphi}(\sigma,t,\tau,\mathcal{I})$ $=$ $E_{\varphi}(\sigma,t,t,\mathcal{I})$. $\Box$
     % \begin{align*}
     %     & (i)\ \forall \sigma \in (\mathbb{R}^{\geq 0} \times \Sigma)^*, \forall e \in (\mathbb{R}^{\geq 0} \times \Sigma ), \forall t, t' \in (\mathbb{N}\cup\{0\}), t \leq t', \forall \tau \in [t,t'):\\
     %     & \ E_{\varphi}(\sigma,t,\tau,\mathcal{I})\cdot e \models \varphi \implies E_{\varphi}(\sigma\cdot e,t',\tau',\mathcal{I}') = E_{\varphi}(\sigma,t,\tau,\mathcal{I})\cdot e,\\ & where\ \tau'\in[t',t'+1).\\
     %     & (ii)\ \forall \sigma \in (\mathbb{R}^{\geq 0} \times \Sigma)^*, \forall t, t' \in (\mathbb{N}\cup\{0\}), t \leq t', \forall \tau \in [t,t'):\\
     %     & E_{\varphi}(\sigma,t,\tau,\mathcal{I}) \models \varphi \implies E_{\varphi}(\sigma,t,\tau,\mathcal{I}) = E_{\varphi}(\sigma,t,t,\mathcal{I}). \Box
     % \end{align*} %where $\mathcal{I}$ and $\mathcal{I}'$ are environmental inputs at $t$ and $t'$, respectively.
 \end{itemize}

\end{definition}

\paragraph{Soundness.} It expresses that for any input sequence observed and at any given time, the sequence of events released as output by the enforcer must satisfy the property $\varphi$.

\paragraph{Instantaneity.} It expresses that for any observed input sequence $\sigma$, and tick $t$, the length of output produced $E_{\varphi}(\sigma,t,\tau,\mathcal{I})$ should be greater than $\sigma$ i.e., $|E_{\varphi}(\sigma,t,\tau,\mathcal{I})|\geq |\sigma|$. This is because the enforcer can proactively insert events in between two $pticks$ whenever necessary. Instantaneity condition ensures that the timestamps of events in $E_{\varphi}(\sigma,t,\tau,\mathcal{I})$ reflect the time at which the corresponding events are performed. Whenever the enforcer receives a new event, it has to react instantaneously and has to produce an output event immediately.

\paragraph{Monotonicity.} The enforcer satisfies the monotonicity property: for any input $\sigma$ and its extension $\sigma'= \sigma \cdot e$, the output produced for the extension must be a continuation of the output produced for $\sigma$. Any output generated by $E_{\varphi}$ at a given time step is preserved in all future iterations, reflecting the irreversible nature of real-time event streams.

\paragraph{Transparency.} %Transparency means that the enforcer will not unnecessarily edit or insert a new event. The first constraint expresses that any new input event $e$ will be simply forwarded by the enforcer if the continuation of the enforcer output with the event $e$ satisfy the property $\varphi$, i.e., if $E_{\varphi}(\sigma,t,\tau,\mathcal{I})\cdot e \models \varphi$, then $E_{\varphi}(\sigma\cdot e,t',\tau',\mathcal{I}') = E_{\varphi}(\sigma,t,\tau,\mathcal{I})\cdot e$. The second constraint specifies that it should not unnecessarily insert an event between the interval $[t,t']$ if $\sigma'$ satisfies $\varphi$ for the duration.
Transparency ensures that the enforcer performs the minimal necessary intervention, avoiding redundant edits or insertions. This property is governed by two primary constraints:
(\emph{Identity Preservation}) If the current output trace, extended by a new input event $e$, satisfies the safety property $\varphi$ (i.e., $E_{\varphi}(\sigma, t, \tau, \mathcal{I}) \cdot e \models \varphi$), the enforcer must forward $e$ without modification. Formally: $E_{\varphi}(\sigma \cdot e, t', \tau', \mathcal{I}') = E_{\varphi}(\sigma, t, \tau, \mathcal{I}) \cdot e$.
(\emph{Non-Intrusion}) The enforcer is prohibited from inserting unnecessary events during the interval {$[t, t')$} if the existing output sequence $\sigma'$ remains compliant with $\varphi$ throughout that duration.
% \begin{itemize}
%     \item Identity Preservation: If the current output trace, extended by a new input event $e$, satisfies the safety property $\varphi$ (i.e., $E_{\varphi}(\sigma, t, \tau, \mathcal{I}) \cdot e \models \varphi$), the enforcer must forward $e$ without modification. Formally: $E_{\varphi}(\sigma \cdot e, t', \tau', \mathcal{I}') = E_{\varphi}(\sigma, t, \tau, \mathcal{I}) \cdot e$.

%     \item Non-Intrusion: The enforcer is prohibited from inserting unnecessary events during the interval {$[t, t')$} if the existing output sequence $\sigma'$ remains compliant with $\varphi$ throughout that duration.
% \end{itemize}

Together, these constraints ensure that the enforcer acts as an identity function whenever the system behaves safely.

\ignore{
\textcolor{blue}{
\paragraph{Soundness and Instantaneity.} The enforcer $E_{\varphi}$ ensures that the sequence of events it outputs satisfies the property $\varphi$. Basin et al.~\cite{DBLP:journals/jcs/BasinDH24} state two conditions for achieving soundness: (i) the SuM and enforcer must be synchronized, and (ii) the enforcer must be fast enough to keep up with the real-time system behavior. These conditions also apply in our model.
Soundness condition ensures that the order of events in $\sigma'$ observed by $E_{\varphi}$ reflects the order of events in $\sigma$. However, the length of $\sigma'$ can be greater than $\sigma$, i.e., $|\sigma'|\geq |\sigma|$, as the enforcer can proactively insert events in between two $pticks$ whenever necessary. Instantaneity condition ensures that the timestamps of events in $\sigma'$ reflect the time at which the corresponding events are performed. Whenever the enforcer receives a new event, it has to react instantaneously and has to produce an output event immediately} \sr{Unclear..to discuss.}. \sr{There is only one constraint expressed currently under soundness?  Also, mainly the output produced by the enforcer must satisfy the property being monitored $\varphi$..}

%The interval $\Delta$ between two \emph{pticks} must satisfy the real-time condition $\Delta > \delta_S + 2\delta_{S\leftrightarrow E} + \delta_E$~\cite{DBLP:conf/cav/HubletLBKT24}, where where $\delta_S$ is the worst-case time needed by the system to create events before performing observable actions and process the enforcer’s reactions, $\delta_{S\leftrightarrow E}$ is the worst-case communication time between the SuM and the enforcer $E_{\varphi}$, and $\delta_E$ is the worst-case latency of the enforcer.

\paragraph{Monotonicity.} The enforcer satisfies the monotonicity property: for any input $\sigma$ and its extension $\sigma'= \sigma \cdot e$, the output produced for the extension must be a continuation of the output produced for $\sigma$. Any output generated by $E_{\varphi}$ at a given time step is preserved in all future iterations, reflecting the irreversible nature of real-time event streams.

\paragraph{Transparency.} %Transparency means that the enforcer will not unnecessarily edit or insert a new event. The first constraint expresses that any new input event $e$ will be simply forwarded by the enforcer if the continuation of the enforcer output with the event $e$ satisfy the property $\varphi$, i.e., if $E_{\varphi}(\sigma,t,\tau,\mathcal{I})\cdot e \models \varphi$, then $E_{\varphi}(\sigma\cdot e,t',\tau',\mathcal{I}') = E_{\varphi}(\sigma,t,\tau,\mathcal{I})\cdot e$. The second constraint specifies that it should not unnecessarily insert an event between the interval $[t,t']$ if $\sigma'$ satisfies $\varphi$ for the duration.
Transparency ensures that the enforcer performs the minimal necessary intervention, avoiding redundant edits or insertions. This property is governed by two primary constraints:
\begin{itemize}
    \item Identity Preservation: If the current output trace, extended by a new input event $e$, satisfies the safety property $\varphi$ (i.e., $E_{\varphi}(\sigma, t, \tau, \mathcal{I}) \cdot e \models \varphi$), the enforcer must forward $e$ without modification. Formally: $E_{\varphi}(\sigma \cdot e, t', \tau', \mathcal{I}') = E_{\varphi}(\sigma, t, \tau, \mathcal{I}) \cdot e$.

    \item Non-Intrusion: The enforcer is prohibited from inserting unnecessary events during the interval \textcolor{blue}{$[t, t')$} if the existing output sequence $\sigma'$ remains compliant with $\varphi$ throughout that duration.
\end{itemize} Together, these constraints ensure that the enforcer acts as an identity function whenever the system behaves safely.
}

\subsection{Algorithm}
In this section, we present the algorithm for the Enforcer $E_{\varphi}$ for a given safety property $\varphi$. We assume that $\varphi$ is formally defined by a safety hybrid automaton $\mathcal{S}$. Algorithm~\ref{algo:enforcer} implements an online reactive interface between the SuM and its environment. The algorithm accepts the automaton $\mathcal{S}$, the synchronization interval \emph{pticks}, and the environmental input set $\mathcal{I}$ as parameters. For an incoming online sequence of system events $\sigma$, where one event is produced at each synchronization step, the algorithm synthesizes an edited output sequence $\sigma'$ that satisfies the property $\varphi$.

The function $\textit{Initialize} (\mathcal{I})$ initializes the hybrid automaton $\mathcal{S}$ in Line 1. Subsequently, Line 2 defines the execution context by setting the current state ($\textit{curState}$) to the automaton's initial state, resetting the simulation time ($\textit{curTime}$) to $0.0$, and nullifying the $\textit{delayedTransition}$ variable—which tracks a pending transition from the previous synchronization step. During each \emph{ptick}, the algorithm receives an output event $e$ as a continuation of the sequence $\sigma$. It calculates the system's dwell time on Line 5 and invokes the $\textit{timeElapse}()$ function on Line 7 to verify that the control remains at the current location and that all invariants are satisfied. The algorithm then proceeds to edit the event $e$ by performing the following checks:
\textbf{Case 1 (invariant not violated and guard satisfied)} If control remains within the current location and invariant conditions are not violated for the entire dwell time, while guard conditions of the action in the event are satisfied at that time, the algorithm outputs the event without modifying it in line $38$ (causing). It then \emph{goto} step $7$ to monitor the remaining part of the \emph{ptick}.
% \begin{cse} [invariant not violated and guard satisfied]
%     If control remains\\ within the current location and invariant conditions are not violated for the entire dwell time, while guard conditions of the action in the event are satisfied at that time, the algorithm outputs the event without modifying it in line $38$ (causing). It then \emph{goto} step $7$ to monitor the remaining part of the \emph{ptick}.
% \end{cse}
\textbf{Case 2 (invariant not violated, but guard not satisfied)} If control is within the current location and invariant conditions are not violated, however, the guard conditions of the action in the event are not satisfied at that time. The algorithm suppresses the event (suppressing), while generating a \emph{stutter} event that allows continuation of the previous event. It puts the suppressed event in the delayed event in line $45$ to process the event at a later time if possible. It goes to step $7$ for the further processing of the remaining time of the \emph{ptick}.
% \begin{cse} [invariant not violated, but guard not satisfied]
%     If control is\\ within the current location and invariant conditions are not violated, however, the guard conditions of the action in the event are not satisfied at that time. The algorithm suppresses the event (suppressing), while generating a \emph{dummy} event that allows continuation of the previous event. It puts the suppressed event in the delayed event in line $45$ to process the event at a later time if possible. It goes to step $7$ for the further processing of the remaining time of the \emph{ptick}.
% \end{cse}
\textbf{Case 3 (invariant violated and delayed event possible)} If invariant conditions are violated within the current location, the algorithm checks for an enabled delayed action, which means guard conditions of the action are satisfied at that time. The action, along with its delayed time, is output as a delayed event in line $14$ (delaying). The control goes to step $7$ for the monitoring of the remaining time of the \emph{ptick}.
% \begin{cse} [invariant violated and delayed event possible]
%     If invariant conditions are violated within the current location, the algorithm checks for an enabled delayed action, which means guard conditions of the action are satisfied at that time. The action, along with its delayed time, is output as a delayed event in line $14$ (delaying). The control goes to step $7$ for the monitoring of the remaining time of the \emph{ptick}.
% \end{cse}
\textbf{Case 4 (invariant violated and delayed event not enabled)} If invariant conditions are violated within the current location, and the guard is not satisfied at that time for a delayed action, the algorithm finds an alternate action that is enabled at that time. It outputs the alternate action along with the new time as an inserted event in line $22$ (suppressing-inserting). The control goes to step $7$.
% \begin{cse} [invariant violated and delayed event not enabled]
%    If invariant conditions are violated within the current location, and the guard is not satisfied at that time for a delayed action, the algorithm finds an alternate action that is enabled at that time. It outputs the alternate action along with the new time as an inserted event in line $22$ (suppressing-inserting). The control goes to step $7$.
% \end{cse}
\textbf{Case 5 (invariant violated and no delayed event exists)} If invariant is violated and no delayed event exists, the algorithm finds an action among the enabled actions at that time to insert as a new event in line $31$ (inserting). The control goes to step $7$ to monitor the remaining time of the \emph{ptick}.
% \begin{cse} [invariant violated and no delayed event exists]
%    If invariant is violated and no delayed event exists, the algorithm finds an action among the enabled actions at that time to insert as a new event in line $31$ (inserting). The control goes to step $7$ to monitor the remaining time of the \emph{ptick}.
% \end{cse}

\begin{algorithm}
\scriptsize
\SetKwInOut{Input}{input}\SetKwInOut{Output}{output}
\Input{A safety hybrid automaton $\mathcal{S}$, the synchronization $pticks$, an output event sequence $\sigma$, and the environmental inputs $\mathcal{I}$.}
\Output{A sequence of edited output events $\sigma'$.}
\DontPrintSemicolon
\caption{\textsf{Enforcer} $E_{\varphi}$}
\label{algo:enforcer}
\BlankLine
\LinesNumbered
%\SetAlgoLined
%{\{$initState$, $params$\} $\gets$ $\textit{Input}_{\varphi}(\mathcal{I})$; \scriptsize{\tcc*{Initialize $\mathcal{S}$ to standard hybrid automaton.}} }
{\{$initState$\} $\gets$ $\textit{Initialize}(\mathcal{I})$; \scriptsize{\tcc*{Initialize $\mathcal{S}$.}} }
Set $curState \gets initState$, $curTime \gets 0.0$, $delayedTransition \gets null$; \\
%\scriptsize{\tcp*[l]{The function $\textit{Edit}_\varphi$ begins:}}
\For{each $ptick$ t}{
  $nxtPtick \gets t + 1$;\\
  Get next event $e$: $\{eventTime, transition\} \gets \sigma$;\\
  %$nxtEventTime \gets eventTime$;\\
  {Compute $deltaTime$ = ($eventTime - curTime$); \scriptsize{\tcc*{Compute the time to elapse.}} }
  {\{$invariantOk$, $curState$, $curTime$\} $\gets$ $timeElapse(curState,deltaTime)$; \scriptsize{\tcc*{$timeElapse()$ returns false the moment invariant is violated.}} }
  \eIf{curTime $<$ nxtPtick}
     {
     \eIf{invariantOk $\neq$ True}
     {
        \eIf{delayedTransition $\neq$ null}
        {
           $e' \gets \{curTime, delayedTransition\}$
           {$transPoss \gets delayedTransitionPossible()$; \scriptsize{\tcc*{Returns $True$ if guard of the delayed transition satisfied at that time.}} }
           \eIf{transPoss == True}
           {
              $\sigma' \gets \sigma'.append(e')$;\\
              {Output $\sigma'$; \scriptsize{\tcc*{A delayed event is outputted (delaying).}} }
              $delayedTransition \gets null$;\\
              {Compute $deltaTime$ = ($nxtPtick - curTime$); \scriptsize{\tcc*{Time elapse for the remaining part of the current $ptick$.}}}
              $goto$ Step 7;
           }
           {
             {Find a transition $a \in enabledTransitions(curState)$; \scriptsize{\tcc*{Find a transition in current state whose guard is satisfied.}} }
             $e' \gets \{curTime, a\}$;\\
             $\sigma' \gets \sigma'.append(e')$;\\
             {Output $\sigma'$; \scriptsize{\tcc*{Event is inserted as invariant is violated (delayed event is suppressed) (suppressing-inserting).}} }
             $delayedTransition \gets null$;\\
             Compute $deltaTime$ = ($nxtPtick - curTime$);\\
             $goto$ Step 7;
           }
        }
        {
           {Find a transition $a \in enabledTransitions(curLoc)$; \scriptsize{\tcc*{Find a transition in current location (no delayed event exists).}} }
           $e' \gets \{curTime, a\}$;\\
           $\sigma' \gets \sigma'.append(e')$;\\
           {Output $\sigma'$; \scriptsize{\tcc*{Invariant violation, event inserted (inserting).}} }
           Compute $deltaTime$ = ($nxtPtick - curTime$);\\
           $goto$ Step 7;
        }
     }
     {
        \eIf{transition $\in$ enabledTransition(curState)}
        {
           $\sigma' \gets \sigma'.append(e)$;\\
           {Output $\sigma'$; \scriptsize{\tcc*{The event is executable at that time.}} }
           Compute $deltaTime$ = ($nxtPtick - curTime$);\\
           $goto$ Step 7;
        }
        {
           {Suppress-delay event: $e' \gets d$; \scriptsize{\tcc*{$d$ is a stutter event that allows continuation of the previous event.}} }
           $\sigma' \gets \sigma'.append(e')$\\
           {Output $\sigma'$; \scriptsize{\tcc*{The event is suppressed (suppressing).}} }
           $delayedTransition \gets transition$;\\
           Compute $deltaTime$ = ($nxtPtick - curTime$);\\
           $goto$ Step 7;
        }
     }
    }
    {$continue$;
    }
}
\end{algorithm}

\subsection{Enforceability}
%\rr{Re vs. Rv discussion here reads redundant and out of place. This discussion can go once in introduction.}
%\rr{Can we be more specific and say the operations required in the algorithm. One of the operation is timeElapse, which is not a decision question but a computation question. We need to check under what conditions is timeElapse computable, exactly or approximately.}
We now examine the enforceability of our proposed approach.
The enforcer must synthesize a rectified event sequence by identifying an admissible set of enabled transitions. Consequently, the execution of Algorithm~\ref{algo:enforcer} fundamentally depends on solving underlying state reachability and membership problems.
For example, the timeElapse() function, which computes the continuous state evolution over a time interval $\Delta t$, and the enabledTransitions() function, which evaluates guard conditions against the current state.
Because these problems are solvable for certain subclasses of hybrid automata (e.g, timed automata, initialized rectangular automata, and time-bounded monotonic linear hybrid automata~\cite{DBLP:conf/stoc/HenzingerKPV95}), the applicability of our framework is restricted to those subclasses.
In our model, a property $\varphi$ is deemed \textit{non-enforceable} if the set of enabled actions is empty at a critical junction where a corrective action is required to prevent a safety violation.
%Because these decision problems \rr{What is the decision problem?}are solvable for certain subclasses of hybrid automata, the applicability of our framework is restricted to those subclasses} initialized rectangular automata, and time-bounded monotonic linear hybrid automata~\cite{DBLP:conf/stoc/HenzingerKPV95}.
We define enforceability as follows:

\begin{definition} [Enforceability~\cite{DBLP:journals/tecs/PinisettyRSATH17}]
    Let $\varphi \subseteq (\mathbb{R}^{\geq 0} \times \Sigma)^*$ be a property, we say that $\varphi$ is enforceable iff an enforcer $E_{\varphi}$ for $\varphi$ exists according to Definition~\ref{def:enforcer}.
\end{definition}

\begin{definition} [Condition for Enforceability] \label{def:enforceabilityCondition}
    Consider a property $\varphi$ that is defined as a safety hybrid automaton $\mathcal{S}$ = \big(\emph{Loc}$_\mathcal{S}$, \emph{Var}, \emph{Flow}, \emph{Init}, \emph{$\Sigma$}$_\mathcal{S}$, \emph{Edge}, \emph{Inv}, \emph{Sink}\big), property $\varphi$ is enforceable iff the following condition holds:
    \begin{align*}
        & \forall l \in \emph{Loc}_\mathcal{S} \setminus \emph{Sink}, \forall v \in \emph{Inv}(l) \implies \exists a \in \Sigma_{\mathcal{S}}\setminus \{fail\} : (l,v)\xrightarrow{a} (l',v'),\\
        & where\ l'\in \emph{Loc}_\mathcal{S} \setminus \emph{Sink},\ and\ v'\in \emph{Inv}(l')
    \end{align*}
\end{definition}

%\rr{What is an accepting state? Is that mentioned earlier?}
%\rr{Is a state reachable from itself? If so, then this definition will be satisfied for any HA. What about dummy/stutter actions?}
\noindent The enforceability condition expresses that for a property $\varphi$ defined as a safety hybrid automaton $\mathcal{S}$ where all locations in \emph{Loc}$_\mathcal{S}\setminus$\emph{Sink} are accepting locations, $E_{\varphi}$ according to Definition~\ref{def:enforcer} exists iff an accepting state $(l',v')$ is reachable from every non-violating state $(l,v)$ where $l$ and $l'$ are accepting locations in \emph{Loc}$_\mathcal{S}\setminus$\emph{Sink}, and $v'\in$ \emph{Inv}($l'$).

%%%%%%%%%%%%%%%%%%%%%%%%%%%%%%%%%
\begin{definition}[$\textit{E}^*_{\varphi}$]
    %\sr{Define $E'_\varphi$ based on the algo}\\
    Consider an enforceable safety property $\varphi \subseteq (\mathbb{R}^{\geq 0} \times \Sigma)^*$. We define the function $E^*_{\varphi}$: $(\mathbb{R}^{\geq 0} \times \Sigma)^*$ $\to$ $(\mathbb{R}^{\geq 0} \times \Sigma)^*$ as follows: Let $\sigma$ $=$ $e_1\cdot e_2\cdots e_m$ be a timed-word received by Algorithm~\ref{algo:enforcer}. Then we let $E^*_{\varphi}(\sigma)$ $=$ $e_1'\cdot e_2'\cdots e_n'$, where $n\geq m$. $e_t'$ is an event output by Algorithm~\ref{algo:enforcer}. If $E^*_{\varphi}(\sigma) = \sigma'$, then the following holds:
    %\\Given a property $\varphi \subseteq (\mathbb{R}^{\geq 0} \times \Sigma)^*$, for $\forall \sigma,\sigma'$ $\in$ $(\mathbb{R}^{\geq 0} \times \Sigma)^*$, and $\forall t,t'$ $\in (\mathbb{N}\cup\{0\})$, where $t \leq t'$, and $\tau = [t,t')$, if $\textit{Edit}_{\varphi}(\sigma,t,\tau)$ $=$ $\sigma'$, then for $\forall e \in (\mathbb{R}^{\geq 0} \times \Sigma)$:
    \begin{itemize}
       \item[(i)] $\sigma'\cdot e$ $\models$ $\varphi$ $\implies$ $\textit{E}^*_{\varphi}(\sigma'\cdot e)$ $=$ $\sigma'\cdot e$.
       \item[(ii)] $\sigma'\cdot e$ $\not\models$ $\varphi$ $\implies$ $\textit{E}^*_{\varphi}(\sigma'\cdot e)$ $=$ $\sigma'\cdot \emph{stutter}$, where \emph{stutter} is a stutter event, \emph{stutter} $\in$ $(\mathbb{R}^{\geq 0} \times \{d\})$ allowing continuation of a previous event.
       \item[(iii)] $(\exists$ $\tau' \in (t',t'+1)$ $|$ $\sigma'$ $\not\models$ $\varphi)$ $\implies$ $\textit{E}^*_{\varphi}(\sigma')$ $=$ $\sigma'\cdot e$ $|$ $\sigma'\cdot e$ $\models$ $\varphi$, where $t'$ and $t'+1$ are two consecutive $pticks$, and $\tau\in [t',t'+1)$ (inserting events when invariants are violated). $\Box$
   \end{itemize}
\end{definition} %Where for every output event $e$, the enforcer produces rectified output events in each reactive cycle $ptick$. It allows the event to output when $e$ satisfies $\varphi$, in that case, $e'=e$. It suppresses or delays the event $e$ if it does not satisfy $\varphi$. It generates a dummy event $d$ in those cases to allow continuations of the previous events. The enforcer can also insert events between two consecutive $pticks$ as needed.

%\sr{We can define a proposition as follows.. if possible give a skecteh of correctness proof in an appendix !!}
\begin{proposition} [Correctness of the enforcement algorithm] \label{prop:prop2}
Given any \emph{safety hybrid automaton} $\mathcal{S}$ for the property $\varphi$ that satisfies the condition in Definition~\ref{def:enforceabilityCondition}, the function $E^*_\varphi$ defined above is an enforcer for $\varphi$, i.e., it satisfies all the constraints as per Definition~\ref{def:enforcer}.
\end{proposition}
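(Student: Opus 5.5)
The proof goes by induction on the number of \emph{pticks} processed by Algorithm~\ref{algo:enforcer}. The invariant to carry is: after processing ptick $t$, the emitted word $\sigma'=E^*_\varphi(\sigma)$ is the projection of a run of $\mathcal{S}$ that never enters \emph{Sink}, and whose current state $(curState,curTime)$ is an accepting state $(l,v)$ with $v\in\mathrm{Inv}(l)$. Soundness follows directly from this invariant. Instantaneity, monotonicity and transparency are then read off the structure of the algorithm, since each \emph{ptick} only appends to $\sigma'$.

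\textbf{Soundness.} The base case is given by \emph{Initialize}$(\mathcal{I})$, which places the automaton in $\mathit{Init}$, an accepting state. For the inductive step, fix a ptick $[t,t+1)$ and split on the five cases of the algorithm.
\begin{itemize}
\item Case 1 (lines 37--40): the event $e$ is appended only when \emph{timeElapse} reports no invariant violation on $[curTime,eventTime]$ and the guard of $e$ holds. This extends the run by a legal timed step and a legal discrete step.
\item Case 2 (lines 42--46): a stutter event is appended. By the Remark on stutter transitions it has guard $\mathbb{R}^{|Var|}$ and identity reset, so it keeps the state accepting.
\item Cases 3--5: \emph{timeElapse} stops at the first instant $\tau$ where the invariant would be violated, so $v\in\mathrm{Inv}(l)$ still holds at $\tau$. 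The condition of Definition~\ref{def:enforceabilityCondition} then guarantees some $a\neq fail$ leading to an accepting $(l',v')$. Hence \emph{enabledTransitions} is nonempty and the inserted or delayed event keeps the run out of \emph{Sink}.
\end{itemize}
After each of these steps the algorithm jumps back to line 7 for the rest of the ptick. The same argument then repeats within the ptick, using an inner induction on the events emitted there.

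The main obstacle is that inner induction: ruling out Zeno behaviour, i.e.\ infinitely many insertions before $t+1$. I would first try to assume that the chosen $l'$ either has a positive minimal dwell time or that the enforceability condition is read as requiring an invariant-respecting timed successor. Failing that, I would state the hypothesis explicitly, as the paper implicitly does for \emph{timeElapse}. I would also assume, as in the enforceability discussion, that \emph{timeElapse} and \emph{enabledTransitions} are computed exactly for the considered subclass.

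\textbf{Instantaneity.} Each input event received at a ptick yields exactly one output event at that ptick: the event itself, a stutter, a delayed event, or an insertion. Additional events are only ever added, never removed. So $|E^*_\varphi(\sigma)|\ge|\sigma|$.

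\textbf{Monotonicity.} The algorithm only performs $\sigma'.append(\cdot)$ and never rewrites what it has already output. So for $\sigma\preceq\sigma'$ and $t\le t'$ the earlier output is a prefix of the later one.

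\textbf{Transparency (i).} This is clause (i) of the definition of $E^*_\varphi$, realised by Case 1. If $\sigma'\cdot e\models\varphi$, the invariant holds up to $eventTime$ and the guard holds, so the algorithm appends $e$ unchanged.

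\textbf{Transparency (ii).} Insertions occur only in Cases 3--5, which are triggered only when \emph{invariantOk} is false, i.e.\ only when continuing with the current output would violate $\varphi$ before the next ptick. Therefore, if the output satisfies $\varphi$ throughout $[t,\tau)$, nothing is added after time $t$, and $E^*_\varphi(\sigma,t,\tau,\mathcal{I})=E^*_\varphi(\sigma,t,t,\mathcal{I})$.
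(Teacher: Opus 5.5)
\textbf{Overall.} Your soundness skeleton is the same as the paper's sketch in Appendix~\ref{appendix2}: induction over \emph{pticks}, with forwarding, stutter substitution, and inter-tick delay or insertion. Your arguments for instantaneity, monotonicity and transparency go further, since the paper's proof only argues soundness. However, the key step fails, and the paper's sketch relies on it too (it simply asserts that an inserted event ``steers the trajectory back'').

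\textbf{The gap in Cases 3--5.} You invoke Definition~\ref{def:enforceabilityCondition} to get some $a\neq fail$ leading to an accepting $(l',v')$, and conclude that the inserted or delayed event keeps the run out of \emph{Sink}. But the stutter action $d$ lies in $\Sigma_{\mathcal{S}}\setminus\{fail\}$. By the Remark, every location has a stutter edge with guard $\mathbb{R}^{|Var|}$ and identity reset. So $(l,v)\xrightarrow{d}(l,v)$ witnesses the condition for every $v\in\mathrm{Inv}(l)$, and the condition holds for every safety hybrid automaton. It therefore carries no information.
\begin{itemize}
\item The witness can return exactly the state at which the flow is about to leave $\mathrm{Inv}(l)$. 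Then nothing lets time advance on $(\tau,t+1)$. The jump back to line 7 re-detects the violation at the same instant forever, which is zero progress rather than Zeno accumulation.
\item A non-stutter witness is not better in general, because $v'\in\mathrm{Inv}(l')$ may again be an exit point.
\item Concrete counterexample: one location with $\dot x=1$, $x(0)=0$, $\mathrm{Inv}(l)=\{x\le 1\}$, and no edges besides the stutter and \emph{fail}. This timed automaton satisfies Definition~\ref{def:enforceabilityCondition}, yet no output can keep the trajectory out of \emph{Sink} past time $1$.
\end{itemize}
So the statement cannot be proved from the condition as written, and the strengthening you list only as a fallback is indispensable. One sufficient hypothesis: from every $(l,v)$ with $v\in\mathrm{Inv}(l)$ at which the flow exits $\mathrm{Inv}(l)$, some action other than $d$ and $fail$ leads to a state from which time can elapse inside the target invariant for at least a fixed $\delta>0$. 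This also bounds the insertions per \emph{ptick} by about $1/\delta$, which settles Zeno.

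You must also require the algorithm to emit such a witness. \emph{enabledTransitions} returns an arbitrary enabled action, possibly $d$. \emph{delayedTransitionPossible} checks only the guard of the pending action. So the existence of a good action says nothing about the action actually output in Cases 3--5.

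\textbf{Two smaller points.}
\begin{itemize}
\item Your claim that \emph{timeElapse} stops at a first violating instant $\tau$ with $v\in\mathrm{Inv}(l)$ still holding presupposes closed invariants. For an open invariant such as $\{x<1\}$ with $\dot x=1$, the admissible times form $[curTime,\tau^*)$ and the state at $\tau^*$ is outside $\mathrm{Inv}(l)$. There is no last admissible instant at which any enforceability condition could be applied.
\item Transparency (i) needs determinism. $\sigma'\cdot e\models\varphi$ only asserts that some accepting run exists, whereas the algorithm tracks a single state. Definition~\ref{def:HA} allows differential inclusions, set-valued resets and several edges with the same label. So the tracked state need not be the one witnessing acceptance, and Case~1 need not fire. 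You should assume that $\mathcal{S}$ induces a unique trajectory per timed word, as the algorithm implicitly does.
\end{itemize}
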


%\sr{Add a line that the proof is in the Appendix..}
\noindent A brief proof sketch of Proposition~\ref{prop:prop2} is given in Appendix~\ref{appendix2}.

\section{Case Study}

Modern autonomous vehicles offer many different safety and comfort functions. Most of them are nowadays realized in software running on a bunch of electronic control units with connected actuators and sensors. For example, an adaptive cruise control system of a car maintains the speed of the vehicle without human intervention. However, these systems are susceptible to sensor errors or adversarial attacks, which can lead to catastrophic outcomes. A runtime monitoring and enforcement system can detect such situations and generate commands in accordance with the security policy to prevent safety violations.

In this section, we present a case study of our approach on \emph{adaptive cruise controller} (ACC) system~\cite{DBLP:conf/tacas/BogomolovFGH17} of an autonomous vehicle, which is also a well-known domain in hybrid systems. An ACC is a reactive system; it works in synchronization with its environment. In each synchronization step, it takes in the environmental inputs and produces commands/events for the system. In this work, we take into account some real-world specifications and safety requirements for modern cars as presented in~\cite{DBLP:conf/asm/HoudekR20}, like maximum allowable velocity on a high-speed lane, minimum safety distance requirements, etc. A description of the domain and a scenario is presented in Section~\ref{sec:motive}, where we assume the primary controller of the system is faulty, which can jeopardize safety requirements. An example safety requirement is shown in Figure~\ref{fig:car-cruise}, where the current car needs to maintain a safe distance of $10$ meters with the preceding car.

% \begin{figure}[htbp]
%     %\captionsetup{font=scriptsize}
%     \centering
%     \includegraphics[width=0.5\textwidth]{images/car-cruise.drawio.png}
%     \caption{Schematic diagram of the ACC where the safety requirement is to maintain the safe distance $S_d > 10$. $V_C$ and $V_f$ represent the velocity of the current car and the front car, respectively.}
%     \label{fig:car-cruise}
% \end{figure}

\begin{figure}[htbp]
    %\captionsetup
    \vspace{-20pt}
    \centering
    \begin{subfigure}[b]{0.49\textwidth}
        \centering
        \includegraphics[width=0.95\textwidth]{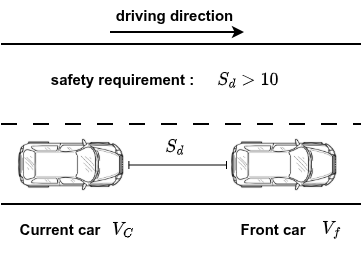}
        \caption{The safety requirement is to maintain the safe distance $S_d > 10$.}
        \label{fig:car-cruise}
    \end{subfigure}
    \hfill
    \begin{subfigure}[b]{0.49\textwidth}
        \centering
        \includegraphics[width=0.95\textwidth]{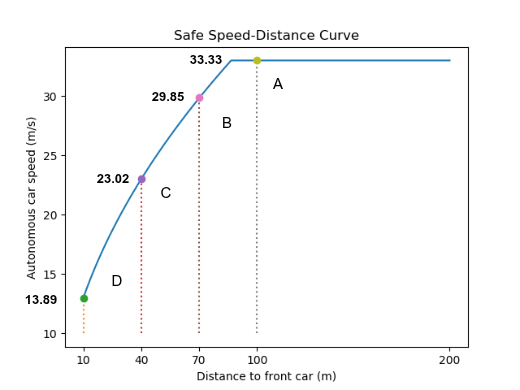}
        \caption{The speed-distance curve.}
        \label{fig:vel-dist}
    \end{subfigure}
    \caption{Safety requirement and Speed-distance curve}
    \label{fig:car-safety}
    \vspace{-20pt}
\end{figure}

%For a set of expected properties of the system to be monitored, one major challenge in building a runtime enforcement system is to write an appropriate monitor specification~\cite{DBLP:conf/kbse/LegunsenHXRM16}.
A central challenge in developing runtime enforcement frameworks is the formalization of monitor specifications that accurately capture the desired system properties~\cite{DBLP:conf/kbse/LegunsenHXRM16}.
In this work, we provide a monitor specification for runtime enforcement of the ACC system. The car movement control model is formalized with an automaton language. For a safety requirement, as shown in Figure~\ref{fig:car-cruise}, the maintenance of the speed does not only depend on the desired speed but also on the vehicle ahead.
Furthermore, the car's motion involves both discrete control modes (e.g., acceleration, deceleration, braking) and continuous features (e.g., car speed and distance to the preceding car) in its behaviour.
%Therefore, it is hard to specify with a static formal language. We use \emph{parametric hybrid automata} as a modeling language, an extension of hybrid automata that introduces parametric expressions for flow, transition conditions, and invariants. For the safety requirement in Figure~\ref{fig:car-cruise}, the parametric hybrid automaton modeling of the domain is shown in Figure~\ref{fig:monitor-pha}.
We use \emph{safety hybrid automata} as a modeling language for the safety requirement in Figure~\ref{fig:car-cruise}. The safety hybrid automaton modeling of the domain is shown in Figure~\ref{fig:monitor-ha}.
%Depending on the system's discrete state changes, the ACC model has five control modes. Apart from that, there is also a \emph{Sink} (not shown in the figure for simplicity of the diagram) that represents a \emph{Failure-mode}. From each of the five modes shown in figure, there is a transition to the \emph{Sink} whenever the safety property is violated. The guard, flow, and invariant conditions are parametric in the automaton, which can be instantiated to a standard hybrid automaton by initializing the parameters.
Our enforcement framework for the ACC system uses a hybrid automaton with five discrete control modes and an implicit \emph{Sink} location (omitted from the figure for clarity) that represents a safety failure. From each operational location, a transition to the Sink is enabled whenever the safety property is violated.
%Crucially, the automaton's dynamics—including flows, guards, and location invariants—are parameterized. This parametric nature enables flexible synthesis of safety monitors, as the model can be instantiated as a concrete hybrid automaton by initializing its parameters for a given scenario.

% \begin{figure}[htbp]
%     %\captionsetup{font=scriptsize}
%     \centering
%     \includegraphics[width=0.75\textwidth]{images/monitoring-pha.drawio.png}
%     \caption{The Parametric Hybrid Automaton where $a$, $v$, and $d$ represent acceleration, velocity, and the distance from the front car, respectively. $u$ represents the velocity of the front car. $y$, $z$, $v_f$, $d_1$, $d_2$, $d_3$, and $s_d$ are the parameters to the automaton.}
%     \label{fig:monitor-pha}
% \end{figure}

Initially, the car is in \emph{Init Mode}. We compute a \emph{speed-distance curve} for a given scenario to generate the monitoring specifications. For example, given a safety distance $S_d = 10m$ (meters), a maximum speed of the current car $V_C = 120kmph$ (kilometers per hour), a constant speed of the front car $V_f = 50kmph$, an initial distance from the front car $d = 200m$, and a maximum braking limit $K = -6m/s^2$ (meters per second$^2$).
%\rr{What is braking limit? Is it deceleration? Should this be negative?}.
We compute the \emph{speed-distance curve} (as shown in Figure~\ref{fig:vel-dist}) with a reverse iterative algorithm based on the following equation: $V_{safe} = \sqrt{V_f^2 + 2\times K \times (d - S_d)}$,
% \begin{align}
%     V_{safe} = \sqrt{V_f^2 + 2\times K \times (d - S_d)}
% \end{align}
where $V_{safe}$ represents the maximum allowable speed of the car at a point in the speed-distance curve.
We partition the operational curve into four distinct segments: $A, B, C,$ and $D$, which we use to set the invariants and guard conditions of the automaton $\mathcal{S}$. The set of admissible control actions—comprising \emph{accelerate} ($acc$), \emph{cruise} ($cru$), \emph{decelerate} ($dec$), \emph{brake} ($br$), and \emph{stop} ($st$)—is progressively controlled across these regions to ensure safety compliance.
%Within segment $A$, $acc$, $cru$, $dec$, and $br$ are enabled. Segment $B$ permits only $\{cru, dec, br\}$, while segment $C$ further restricts the system to $\{dec, br\}$. Finally, in segment $D$, $br$ and $st$ remains applicable to ensure safety compliance.
Given an output event stream (see Figure~\ref{fig:event-stream}) from the ACC controller, the enforcement function $\textit{E}_\varphi$ transforms it into a modified sequence (see Figure~\ref{fig:modified-events}, and Table~\ref{table:table1} in Appendix \ref{appendix}) that strictly satisfies the safety property $\varphi$ throughout the execution trace.

% \begin{figure}[htbp]
%     %\captionsetup{font=scriptsize}
%     \centering
%     \includegraphics[width=0.5\textwidth]{images/vel-dist-curve.drawio.png}
%     \caption{The speed-distance curve.}
%     \label{fig:vel-dist}
% \end{figure}

%In our enforcement algorithm, the function $\textit{Input}_\varphi$ computes this speed-distance curve from observed input specifications, and sets the parameters of the safety hybrid automaton $\mathcal{S}$ to initialize it to a standard hybrid automaton as shown in Figure~\ref{fig:monitor-ha}. For a given output event stream shown in Figure~\ref{fig:event-stream}, the $\textit{Edit}_\varphi$ produces a modified output event stream shown in Figure~\ref{fig:modified-events} that satisfies the property $\varphi$.

%Within our enforcement framework, the function $\textit{Input}_\varphi$ synthesizes the safe speed-distance curve from the observed environmental specifications. This function effectively instantiates the parametric hybrid automaton $\mathcal{S}$ by binding its parameters to concrete values to initialize it to a standard hybrid automaton, resulting in the operational monitor shown in Figure~\ref{fig:monitor-ha}.
%Given an output event stream (see Figure~\ref{fig:event-stream}) from the ACC controller, the enforcement function $\textit{Edit}_\varphi$ transforms it into a modified sequence (see Figure~\ref{fig:modified-events} and Table~\ref{table:table1}) that strictly satisfies the safety property $\varphi$ throughout the execution trace.

\paragraph{\textbf{Evaluation:}} The algorithm operates within a synchronous reactive framework, sampling events from the SuM at a uniform interval of $1$ second. For the execution trace illustrated in Figure~\ref{fig:event-stream}, the enforcer processed the entire sequence in $15.098$ s. Notably, the average response time per event was recorded at $4.67$ milliseconds. This low latency confirms that the framework introduces negligible computational overhead, ensuring that the enforcer can operate effectively within high-frequency reactive environments without violating the system's strict timing constraints. \textit{Implementation}: All experiments are performed on a machine with 8 GB RAM, Intel Core i5-8250U@1.60GHz, and 8-core processor with Ubuntu 18.04 64-bit OS. The problem files and the code base can be found at: \url{https://gitlab.com/anonymous2275/RuntimeEnforcer}.
%\sr{Depending on where we submit, the repository should be perhaps anonymous?}
\section{Related Works}

Runtime Enforcement (RE)~\cite{DBLP:journals/fmsd/FalconeMFR11,DBLP:journals/tissec/LigattiBW09,phdPinisetty15,DBLP:journals/fmsd/PinisettyFJMRN14,DBLP:journals/tissec/Schneider00} is an inherently online task that entails the active monitoring and mediation of a SuM to ensure that its execution adheres to a requirement policy.
%This process is managed by an \emph{Enforcer}, a component that intercepts system events and intervenes—according to the underlying requirement model—through actions such as suppressing, delaying, or inserting events.
%Consequently, RE is an inherently online task conducted during system execution.
When the specification includes temporal constraints, the discipline is referred to as real-time enforcement. This represents a more complex challenge than traditional runtime verification (RV)~\cite{DBLP:journals/tosem/BauerLS11,DBLP:conf/rv/Falcone10,DBLP:journals/sttt/FalconeFM12,DBLP:series/natosec/FalconeHR13,DBLP:conf/icse/PinisettySS18}, where the system is merely observed, and violations are reported rather than prevented.
An enforcement monitor (EM) transforms an (untrustworthy) input execution (sequence of events) into an output sequence of events that complies with a property (e.g., defining a desired safety requirement).
The applications of RE are extensive, spanning industrial safety protocols to regulatory compliance, and the field shares significant theoretical foundations with controller synthesis~\cite{DBLP:conf/icalp/AbadiLW89,DBLP:conf/popl/PnueliR89}.
Requirement policies are often divided into provisions and obligations~\cite{DBLP:conf/esorics/HiltyBP05}. Compliance with provisions is determined by the past and present behavior of the SuM; thus, the enforcer can function reactively based on the current system action. In contrast, compliance with obligations is contingent upon future system behavior. This necessitates a proactive enforcement strategy~\cite{DBLP:conf/csfw/BasinDH16}, where the enforcer must anticipate potential trajectories to preemptively prevent violations.

Several RE models have been proposed to address the increasing complexity of system dynamics, ranging from classical discrete-event frameworks to more recent architectures designed for real-time systems. While earlier models focused on untimed safety properties, recent advancements have introduced dense-time constraints and continuous-state evolutions.
Foundations for runtime enforcement were pioneered by Schneider in~\cite{DBLP:journals/tissec/Schneider00} and by Rinard in~\cite{DBLP:conf/oopsla/Rinard03}. 
\cite{DBLP:journals/tissec/Schneider00} discusses the class of safety policies that are enforceable via runtime monitoring. It provides an automata based mechanism for specifying policies that can be validated based on the prefixes of a system's execution. The enforcement logic focuses on safety properties, where the monitor intervenes by blocking the execution the moment it recognizes a sequence of actions that violates the target property. While~\cite{DBLP:conf/oopsla/Rinard03} identifies key properties that the execution must satisfy to be acceptable.It defines the system with a layered set of components, each of which enforces one of the acceptability properties.
%\cite{DBLP:journals/tissec/LigattiBW09} allows the enforcer to correct the input sequence by suppressing and/or inserting events, and the RE mechanisms proposed in~\cite{DBLP:journals/fmsd/FalconeMFR11} allow buffering events and releasing them upon observing a sequence that satisfies the desired property.
%Synthesis of enforcers for real-time properties expressed as dense TA has been studied~\cite{DBLP:journals/fmsd/PinisettyFJMRN14,DBLP:journals/scl/FalconeJMP16}.
%
%None of the above approaches are suitable for reactive systems since halting the program and delaying actions is not suitable. This is because for reactive systems the enforcer has to react instantaneously.
The framework introduced by Ligatti et al.~\cite{DBLP:journals/tissec/LigattiBW09} utilizes edit automata to correct input sequences through suppressing and/or inserting events. Similarly, the enforcement mechanisms proposed in~\cite{DBLP:journals/fmsd/FalconeMFR11} rely on buffering events and releasing them only upon the observation of a sequence that satisfies the desired policy. While the synthesis of enforcers for real-time properties—modeled via dense Timed Automata (TA)—has been extensively studied~\cite{DBLP:journals/fmsd/PinisettyFJMRN14,DBLP:journals/scl/FalconeJMP16}, these approaches are often inadequate for reactive systems. In such domains, halting execution or delaying actions are infeasible; rather, the enforcer must provide instantaneous rectification to maintain system liveness and satisfy strict timing constraints.
Thus, the RE approaches such as \cite{DBLP:journals/scl/FalconeJMP16,DBLP:conf/wodes/PinisettyFJM14,DBLP:conf/sac/PinisettyFJM14,DBLP:journals/fmsd/PinisettyFJMRN14} are not suitable for reactive systems.
%

%When considering reactive systems, terminating the system or delaying the reaction is not a feasible action for the enforcement monitor.
%Thus, the RE approaches such as \cite{wodesTE,Pinisetty2014,DBLP:journals/fmsd/PinisettyFJMRN14,DBLP:journals/scl/FalconeJMP16} are not suitable for reactive systems.

Shield synthesis~\cite{DBLP:journals/fmsd/KonighoferABHKT17} introduces an enforcement component, known as a \emph{safety shield}, to guarantee property adherence in reactive hardware systems at runtime. This approach mitigates the computational complexity associated with model checking and full reactive synthesis by focusing on a restricted subset of safety-critical properties, rather than the complete system specification. The shield operates by continuously monitoring the system's input/output interface and intervening to rectify erroneous outputs only when a safety violation is imminent.
The RE frameworks presented in \cite{DBLP:journals/tecs/PinisettyRSATH17,DBLP:conf/spin/PinisettyRSTH17} introduce bidirectional  enforcement monitoring for synchronous reactive systems. This approach provides a methodology for synthesizing EMs for properties specified via a variant of Discrete Timed Automata (DTA) \cite{DBLP:conf/charme/BozgaMT99,DBLP:journals/tcs/AlurD94}.
However, these frameworks consider discrete systems or monitoring at discrete time intervals leaving a gap in enforcement for complex systems that require proactive, continuous-time interventions to maintain safety.

%RE approaches in \cite{DBLP:journals/tecs/PinisettyRSATH17,DBLP:conf/spin/PinisettyRSTH17} present  enforcement frameworks suitable for synchronous reactive systems where the theory of bidirectional synchronous enforcement monitoring is introduced.
%A methodology for synthesis of EMs for properties defined using a variant of Discrete Timed Automata (DTA) \cite{DBLP:conf/charme/BozgaMT99,DBLP:journals/tcs/AlurD94} is proposed. 
%The framework considers similar constraints of soundness, and transparency as in the earlier mechanisms. However, the EM in the framework proposed in \cite{DBLP:journals/tecs/PinisettyRSATH17} also satisfies additional requirements such as instantaneity and causality which are particular to synchronous executions.

%Few works that consider monitoring hybrid systems:
%In~\cite{DBLP:conf/cav/CimattiMT11}, addresses the problem of finding traces of a network of hybrid automata that satisfying a property given as an Message Sequence Chart (MSC). The framework is finding a trace in the network of HAs that satisfy the event sequence of a given MSC.
%In~\cite{DBLP:conf/rv/SistlaZF11}, addresses the problem of monitoring a hybrid system, modeled as a Hidden Markov Chain (HMC), when the correctness specification is given by a deterministic Streett automaton.
%In~\cite{DBLP:conf/itsc/ChaiWLLH19}, presents a dynamic monitoring generation method for communications-based train control (CBTC) system using parametric hybrid automaton (PHA).
%However, these works presents methods for runtime monitoring and verifications of hybrid systems rather that runtime enforcement.

While several frameworks have been developed for hybrid system analysis, their scope is largely confined to monitoring and verification. Cimatti et al.~\cite{DBLP:conf/cav/CimattiMT11} address the problem of trace reconstruction within a network of hybrid automata to satisfy properties specified as Message Sequence Charts (MSCs). In the domain of stochastic models, Sistla et al.~\cite{DBLP:conf/rv/SistlaZF11} investigate the monitoring of hybrid systems modeled as Hidden Markov Chains (HMCs) against specifications defined by deterministic Streett automata. More recently, Chai et al.~\cite{DBLP:conf/itsc/ChaiWLLH19} proposed a dynamic monitoring generation method for communications-based train control (CBTC) systems utilizing Parametric Hybrid Automata (PHA). These studies focus primarily on runtime verification and passive monitoring—where the objective is to detect and report violations—rather than runtime enforcement.

In contrast, we model requirement specifications as hybrid automata, in the reactive setup our framework enables continuous monitoring of the system. Every clock tick events are observed and instantaneously rectified (suppression/delaying also handled with this). In the time period
between ticks, we also allow events to be inserted. Consequently, it can delay or insert events at arbitrary time instants, providing finer-grained control and ensuring property satisfaction in real time. As far as we know, no enforcement frameworks exist that consider expressive HA specification that are suitable for reactive systems.

\section{Conclusion and Future Work}

This paper presents a runtime enforcement framework based on Hybrid Automata monitoring specifications. The proposed approach extends existing runtime enforcement techniques by enabling continuous-time monitoring and proactive intervention for reactive cyber-physical systems. Unlike traditional approaches restricted to untimed or discrete-time specifications, our framework supports expressive hybrid-system properties and allows corrective actions through suppression, delay, and insertion of events both at synchronization points and between consecutive observation intervals.

We formally define the enforcement problem for safety hybrid automata, introduce enforcement constraints such as soundness, monotonicity, transparency, and completeness, and provide enforceability conditions for the proposed framework. An online enforcement algorithm is developed to synthesize safe execution traces while maintaining compliance with system timing constraints. The Adaptive Cruise Control (ACC) case study demonstrated how the enforcer can dynamically modify unsafe controller outputs and maintain critical safety requirements, such as minimum safe distance, throughout execution. Experimental evaluation shows that the framework operates with low latency and negligible runtime overhead, making it suitable for real-time reactive environments.

Overall, the work demonstrates that hybrid automata provide a powerful and expressive formalism for runtime enforcement of cyber-physical systems, enabling finer-grained control and stronger safety guarantees compared to existing discrete-time enforcement techniques.

\paragraph{\textbf{Future works:}} Several promising directions remain for future research. First, the current framework focuses primarily on safety properties represented as prefix-closed specifications. Extending the approach to support richer temporal properties, including liveness and obligation-based requirements, remains an important challenge. Second, the enforcement mechanism currently targets decidable subclasses of hybrid automata; future work may investigate scalable approximation techniques and symbolic reachability methods to support more expressive non-linear hybrid systems.
Another important direction is the integration of probabilistic and stochastic models to handle uncertainty arising from noisy sensors, imperfect environmental observations, and adversarial behaviors. Incorporating learning-based prediction mechanisms could further improve proactive enforcement by enabling the enforcer to anticipate unsafe trajectories earlier and synthesize more optimal corrective actions.
Future work will also explore distributed and decentralized enforcement architectures for multi-agent cyber-physical systems such as autonomous vehicle platoons, robotic swarms, and smart-grid infrastructures. Additionally, integrating shield synthesis techniques with hybrid automata-based enforcement may provide stronger correctness guarantees while reducing computational complexity.
Finally, implementing the framework on physical hardware platforms and evaluating it under real-world deployment conditions will be essential for validating its practical applicability, robustness, and scalability in safety-critical autonomous systems.

\clearpage
\bibliographystyle{plain}
\bibliography{mybib}

@phdthesis{phdPinisetty15,
	author    = {Srinivas Pinisetty},
	title     = {Runtime enforcement of timed properties. (Enforcement {\`{a}} l'{\'{e}}x{\'{e}}cution
	de propri{\'{e}}t{\'{e}}s temporis{\'{e}}es)},
	school    = {University of Rennes 1, France},
	year      = {2015},
	url       = {https://tel.archives-ouvertes.fr/tel-01185842},
	bibsource = {dblp computer science bibliography, https://dblp.org}
}

@inproceedings{DBLP:conf/icse/PinisettySS18,
  author       = {Srinivas Pinisetty and
                  Gerardo Schneider and
                  David Sands},
  editor       = {Stefania Gnesi and
                  Nico Plat and
                  Paola Spoletini and
                  Patrizio Pelliccione},
  title        = {Runtime verification of hyperproperties for deterministic programs},
  booktitle    = {Proceedings of the 6th Conference on Formal Methods in Software Engineering,
                  FormaliSE 2018, collocated with {ICSE} 2018, Gothenburg, Sweden, June
                  2, 2018},
  pages        = {20--29},
  publisher    = {{ACM}},
  year         = {2018},
  url          = {https://doi.org/10.1145/3193992.3193995},
  doi          = {10.1145/3193992.3193995},
  bibsource    = {dblp computer science bibliography, https://dblp.org}
}

@article{DBLP:journals/tosem/BauerLS11,
  author       = {Andreas Bauer and
                  Martin Leucker and
                  Christian Schallhart},
  title        = {Runtime Verification for {LTL} and {TLTL}},
  journal      = {{ACM} Trans. Softw. Eng. Methodol.},
  volume       = {20},
  number       = {4},
  pages        = {14:1--14:64},
  year         = {2011},
  url          = {https://doi.org/10.1145/2000799.2000800},
  doi          = {10.1145/2000799.2000800},
  bibsource    = {dblp computer science bibliography, https://dblp.org}
}

@article{ALUR19953,
title = "The algorithmic analysis of hybrid systems",
journal = "Theoretical Computer Science",
volume = "138",
number = "1",
pages = "3 - 34",
year = "1995",
note = "Hybrid Systems",
issn = "0304-3975",
doi = "https://doi.org/10.1016/0304-3975(94)00202-T",
url = "http://www.sciencedirect.com/science/article/pii/030439759400202T",
author = "R. Alur and C. Courcoubetis and N. Halbwachs and T.A. Henzinger and P.-H. Ho and X. Nicollin and A. Olivero and J. Sifakis and S. Yovine"
}

@inproceedings{DBLP:conf/hybrid/AlurCHH92,
  author       = {Rajeev Alur and
                  Costas Courcoubetis and
                  Thomas A. Henzinger and
                  Pei{-}Hsin Ho},
  editor       = {Robert L. Grossman and
                  Anil Nerode and
                  Anders P. Ravn and
                  Hans Rischel},
  title        = {Hybrid Automata: An Algorithmic Approach to the Specification and
                  Verification of Hybrid Systems},
  booktitle    = {Hybrid Systems},
  series       = {Lecture Notes in Computer Science},
  volume       = {736},
  pages        = {209--229},
  publisher    = {Springer},
  year         = {1992},
  url          = {https://doi.org/10.1007/3-540-57318-6\_30},
  doi          = {10.1007/3-540-57318-6\_30},
  bibsource    = {dblp computer science bibliography, https://dblp.org}
}

@inproceedings{DBLP:conf/focs/Pnueli77,
  author       = {Amir Pnueli},
  title        = {The Temporal Logic of Programs},
  booktitle    = {18th Annual Symposium on Foundations of Computer Science, Providence,
                  Rhode Island, USA, 31 October - 1 November 1977},
  pages        = {46--57},
  publisher    = {{IEEE} Computer Society},
  year         = {1977},
  url          = {https://doi.org/10.1109/SFCS.1977.32},
  doi          = {10.1109/SFCS.1977.32},
  bibsource    = {dblp computer science bibliography, https://dblp.org}
}

@inproceedings{DBLP:conf/formats/MalerN04,
  author       = {Oded Maler and
                  Dejan Nickovic},
  editor       = {Yassine Lakhnech and
                  Sergio Yovine},
  title        = {Monitoring Temporal Properties of Continuous Signals},
  booktitle    = {Formal Techniques, Modelling and Analysis of Timed and Fault-Tolerant
                  Systems, Joint International Conferences on Formal Modelling and Analysis
                  of Timed Systems, {FORMATS} 2004 and Formal Techniques in Real-Time
                  and Fault-Tolerant Systems, {FTRTFT} 2004, Grenoble, France, September
                  22-24, 2004, Proceedings},
  series       = {Lecture Notes in Computer Science},
  volume       = {3253},
  pages        = {152--166},
  publisher    = {Springer},
  year         = {2004},
  url          = {https://doi.org/10.1007/978-3-540-30206-3\_12},
  doi          = {10.1007/978-3-540-30206-3\_12},
  bibsource    = {dblp computer science bibliography, https://dblp.org}
}

@article{DBLP:journals/scl/FalconeJMP16,
  author       = {Yli{\`{e}}s Falcone and
                  Thierry J{\'{e}}ron and
                  Herv{\'{e}} Marchand and
                  Srinivas Pinisetty},
  title        = {Runtime enforcement of regular timed properties by suppressing and
                  delaying events},
  journal      = {Sci. Comput. Program.},
  volume       = {123},
  pages        = {2--41},
  year         = {2016},
  url          = {https://doi.org/10.1016/j.scico.2016.02.008},
  doi          = {10.1016/J.SCICO.2016.02.008},
  bibsource    = {dblp computer science bibliography, https://dblp.org}
}

@article{DBLP:journals/sttt/FalconeFM12,
  author       = {Yli{\`{e}}s Falcone and
                  Jean{-}Claude Fernandez and
                  Laurent Mounier},
  title        = {What can you verify and enforce at runtime?},
  journal      = {Int. J. Softw. Tools Technol. Transf.},
  volume       = {14},
  number       = {3},
  pages        = {349--382},
  year         = {2012},
  url          = {https://doi.org/10.1007/s10009-011-0196-8},
  doi          = {10.1007/S10009-011-0196-8},
  bibsource    = {dblp computer science bibliography, https://dblp.org}
}

@incollection{DBLP:series/natosec/FalconeHR13,
  author       = {Yli{\`{e}}s Falcone and
                  Klaus Havelund and
                  Giles Reger},
  editor       = {Manfred Broy and
                  Doron A. Peled and
                  Georg Kalus},
  title        = {A Tutorial on Runtime Verification},
  booktitle    = {Engineering Dependable Software Systems},
  series       = {{NATO} Science for Peace and Security Series, {D:} Information and
                  Communication Security},
  volume       = {34},
  pages        = {141--175},
  publisher    = {{IOS} Press},
  year         = {2013},
  url          = {https://doi.org/10.3233/978-1-61499-207-3-141},
  doi          = {10.3233/978-1-61499-207-3-141},
  bibsource    = {dblp computer science bibliography, https://dblp.org}
}

@article{DBLP:journals/fmsd/PinisettyFJMRN14,
  author       = {Srinivas Pinisetty and
                  Yli{\`{e}}s Falcone and
                  Thierry J{\'{e}}ron and
                  Herv{\'{e}} Marchand and
                  Antoine Rollet and
                  Omer Nguena{-}Timo},
  title        = {Runtime enforcement of timed properties revisited},
  journal      = {Formal Methods Syst. Des.},
  volume       = {45},
  number       = {3},
  pages        = {381--422},
  year         = {2014},
  url          = {https://doi.org/10.1007/s10703-014-0215-y},
  doi          = {10.1007/S10703-014-0215-Y},
  bibsource    = {dblp computer science bibliography, https://dblp.org}
}

@article{DBLP:journals/tissec/LigattiBW09,
  author       = {Jay Ligatti and
                  Lujo Bauer and
                  David Walker},
  title        = {Run-Time Enforcement of Nonsafety Policies},
  journal      = {{ACM} Trans. Inf. Syst. Secur.},
  volume       = {12},
  number       = {3},
  pages        = {19:1--19:41},
  year         = {2009},
  url          = {https://doi.org/10.1145/1455526.1455532},
  doi          = {10.1145/1455526.1455532},
  bibsource    = {dblp computer science bibliography, https://dblp.org}
}

@article{DBLP:journals/tissec/Schneider00,
  author       = {Fred B. Schneider},
  title        = {Enforceable security policies},
  journal      = {{ACM} Trans. Inf. Syst. Secur.},
  volume       = {3},
  number       = {1},
  pages        = {30--50},
  year         = {2000},
  url          = {https://doi.org/10.1145/353323.353382},
  doi          = {10.1145/353323.353382},
  bibsource    = {dblp computer science bibliography, https://dblp.org}
}

@article{DBLP:journals/fmsd/FalconeMFR11,
  author       = {Yli{\`{e}}s Falcone and
                  Laurent Mounier and
                  Jean{-}Claude Fernandez and
                  Jean{-}Luc Richier},
  title        = {Runtime enforcement monitors: composition, synthesis, and enforcement
                  abilities},
  journal      = {Formal Methods Syst. Des.},
  volume       = {38},
  number       = {3},
  pages        = {223--262},
  year         = {2011},
  url          = {https://doi.org/10.1007/s10703-011-0114-4},
  doi          = {10.1007/S10703-011-0114-4},
  bibsource    = {dblp computer science bibliography, https://dblp.org}
}

@inproceedings{DBLP:conf/icalp/AbadiLW89,
  author       = {Mart{\'{\i}}n Abadi and
                  Leslie Lamport and
                  Pierre Wolper},
  editor       = {Giorgio Ausiello and
                  Mariangiola Dezani{-}Ciancaglini and
                  Simona Ronchi Della Rocca},
  title        = {Realizable and Unrealizable Specifications of Reactive Systems},
  booktitle    = {Automata, Languages and Programming, 16th International Colloquium,
                  ICALP89, Stresa, Italy, July 11-15, 1989, Proceedings},
  series       = {Lecture Notes in Computer Science},
  volume       = {372},
  pages        = {1--17},
  publisher    = {Springer},
  year         = {1989},
  url          = {https://doi.org/10.1007/BFb0035748},
  doi          = {10.1007/BFB0035748},
  bibsource    = {dblp computer science bibliography, https://dblp.org}
}

@inproceedings{DBLP:conf/popl/PnueliR89,
  author       = {Amir Pnueli and
                  Roni Rosner},
  title        = {On the Synthesis of a Reactive Module},
  booktitle    = {Conference Record of the Sixteenth Annual {ACM} Symposium on Principles
                  of Programming Languages, Austin, Texas, USA, January 11-13, 1989},
  pages        = {179--190},
  publisher    = {{ACM} Press},
  year         = {1989},
  url          = {https://doi.org/10.1145/75277.75293},
  doi          = {10.1145/75277.75293},
  bibsource    = {dblp computer science bibliography, https://dblp.org}
}

@inproceedings{DBLP:conf/esorics/HiltyBP05,
  author       = {Manuel Hilty and
                  David A. Basin and
                  Alexander Pretschner},
  editor       = {Sabrina De Capitani di Vimercati and
                  Paul F. Syverson and
                  Dieter Gollmann},
  title        = {On Obligations},
  booktitle    = {Computer Security - {ESORICS} 2005, 10th European Symposium on Research
                  in Computer Security, Milan, Italy, September 12-14, 2005, Proceedings},
  series       = {Lecture Notes in Computer Science},
  volume       = {3679},
  pages        = {98--117},
  publisher    = {Springer},
  year         = {2005},
  url          = {https://doi.org/10.1007/11555827\_7},
  doi          = {10.1007/11555827\_7},
  bibsource    = {dblp computer science bibliography, https://dblp.org}
}

@inproceedings{DBLP:conf/csfw/BasinDH16,
  author       = {David A. Basin and
                  S{\o}ren Debois and
                  Thomas T. Hildebrandt},
  title        = {In the Nick of Time: Proactive Prevention of Obligation Violations},
  booktitle    = {{IEEE} 29th Computer Security Foundations Symposium, {CSF} 2016, Lisbon,
                  Portugal, June 27 - July 1, 2016},
  pages        = {120--134},
  publisher    = {{IEEE} Computer Society},
  year         = {2016},
  url          = {https://doi.org/10.1109/CSF.2016.16},
  doi          = {10.1109/CSF.2016.16},
  bibsource    = {dblp computer science bibliography, https://dblp.org}
}

@inproceedings{DBLP:conf/itsc/ChaiWLLH19,
  author       = {Ming Chai and
                  Haifeng Wang and
                  Hongjie Liu and
                  Jidong Lv and
                  Qian Hu},
  title        = {Runtime Verification of Communications-based Train Control with Parametric
                  Hybrid Automata},
  booktitle    = {2019 {IEEE} Intelligent Transportation Systems Conference, {ITSC}
                  2019, Auckland, New Zealand, October 27-30, 2019},
  pages        = {2160--2165},
  publisher    = {{IEEE}},
  year         = {2019},
  url          = {https://doi.org/10.1109/ITSC.2019.8917282},
  doi          = {10.1109/ITSC.2019.8917282},
  bibsource    = {dblp computer science bibliography, https://dblp.org}
}

@inproceedings{DBLP:conf/lics/Henzinger96,
  author       = {Thomas A. Henzinger},
  title        = {The Theory of Hybrid Automata},
  booktitle    = {Proceedings, 11th Annual {IEEE} Symposium on Logic in Computer Science,
                  New Brunswick, New Jersey, USA, July 27-30, 1996},
  pages        = {278--292},
  publisher    = {{IEEE} Computer Society},
  year         = {1996},
  url          = {https://doi.org/10.1109/LICS.1996.561342},
  doi          = {10.1109/LICS.1996.561342},
  bibsource    = {dblp computer science bibliography, https://dblp.org}
}

@article{DBLP:journals/jcs/BasinDH24,
  author       = {David A. Basin and
                  S{\o}ren Debois and
                  Thomas T. Hildebrandt},
  title        = {Proactive enforcement of provisions and obligations},
  journal      = {J. Comput. Secur.},
  volume       = {32},
  number       = {3},
  pages        = {247--289},
  year         = {2024},
  url          = {https://doi.org/10.3233/JCS-210078},
  doi          = {10.3233/JCS-210078},
  bibsource    = {dblp computer science bibliography, https://dblp.org}
}

@inproceedings{DBLP:conf/stoc/HenzingerKPV95,
  author       = {Thomas A. Henzinger and
                  Peter W. Kopke and
                  Anuj Puri and
                  Pravin Varaiya},
  editor       = {Frank Thomson Leighton and
                  Allan Borodin},
  title        = {What's decidable about hybrid automata?},
  booktitle    = {Proceedings of the Twenty-Seventh Annual {ACM} Symposium on Theory
                  of Computing, 29 May-1 June 1995, Las Vegas, Nevada, {USA}},
  pages        = {373--382},
  publisher    = {{ACM}},
  year         = {1995},
  url          = {https://doi.org/10.1145/225058.225162},
  doi          = {10.1145/225058.225162},
  bibsource    = {dblp computer science bibliography, https://dblp.org}
}

@inproceedings{DBLP:conf/rv/Falcone10,
  author       = {Yli{\`{e}}s Falcone},
  editor       = {Howard Barringer and
                  Yli{\`{e}}s Falcone and
                  Bernd Finkbeiner and
                  Klaus Havelund and
                  Insup Lee and
                  Gordon J. Pace and
                  Grigore Rosu and
                  Oleg Sokolsky and
                  Nikolai Tillmann},
  title        = {You Should Better Enforce Than Verify},
  booktitle    = {Runtime Verification - First International Conference, {RV} 2010,
                  St. Julians, Malta, November 1-4, 2010. Proceedings},
  series       = {Lecture Notes in Computer Science},
  pages        = {89--105},
  publisher    = {Springer},
  year         = {2010},
  url          = {https://doi.org/10.1007/978-3-642-16612-9\_9},
  doi          = {10.1007/978-3-642-16612-9\_9},
  bibsource    = {dblp computer science bibliography, https://dblp.org}
}

@article{DBLP:journals/tecs/PinisettyRSATH17,
  author       = {Srinivas Pinisetty and
                  Partha S. Roop and
                  Steven Smyth and
                  Nathan Allen and
                  Stavros Tripakis and
                  Reinhard von Hanxleden},
  title        = {Runtime Enforcement of Cyber-Physical Systems},
  journal      = {{ACM} Trans. Embed. Comput. Syst.},
  volume       = {16},
  number       = {5s},
  pages        = {178:1--178:25},
  year         = {2017},
  url          = {https://doi.org/10.1145/3126500},
  doi          = {10.1145/3126500},
  bibsource    = {dblp computer science bibliography, https://dblp.org}
}

@inproceedings{DBLP:conf/asm/HoudekR20,
  author       = {Frank Houdek and
                  Alexander Raschke},
  editor       = {Alexander Raschke and
                  Dominique M{\'{e}}ry and
                  Frank Houdek},
  title        = {Adaptive Exterior Light and Speed Control System},
  booktitle    = {Rigorous State-Based Methods - 7th International Conference, {ABZ}
                  2020, Ulm, Germany, May 27-29, 2020, Proceedings},
  series       = {Lecture Notes in Computer Science},
  pages        = {281--301},
  publisher    = {Springer},
  year         = {2020},
  url          = {https://doi.org/10.1007/978-3-030-48077-6\_24},
  doi          = {10.1007/978-3-030-48077-6\_24},
  bibsource    = {dblp computer science bibliography, https://dblp.org}
}

@inproceedings{DBLP:conf/tacas/BogomolovFGH17,
  author       = {Sergiy Bogomolov and
                  Goran Frehse and
                  Mirco Giacobbe and
                  Thomas A. Henzinger},
  editor       = {Axel Legay and
                  Tiziana Margaria},
  title        = {Counterexample-Guided Refinement of Template Polyhedra},
  booktitle    = {Tools and Algorithms for the Construction and Analysis of Systems
                  - 23rd International Conference, {TACAS} 2017, Held as Part of the
                  European Joint Conferences on Theory and Practice of Software, {ETAPS}
                  2017, Uppsala, Sweden, April 22-29, 2017, Proceedings, Part {I}},
  series       = {Lecture Notes in Computer Science},
  pages        = {589--606},
  year         = {2017},
  url          = {https://doi.org/10.1007/978-3-662-54577-5\_34},
  doi          = {10.1007/978-3-662-54577-5\_34},
  bibsource    = {dblp computer science bibliography, https://dblp.org}
}

@inproceedings{DBLP:conf/kbse/LegunsenHXRM16,
  author       = {Owolabi Legunsen and
                  Wajih Ul Hassan and
                  Xinyue Xu and
                  Grigore Rosu and
                  Darko Marinov},
  editor       = {David Lo and
                  Sven Apel and
                  Sarfraz Khurshid},
  title        = {How good are the specs? a study of the bug-finding effectiveness of
                  existing Java {API} specifications},
  booktitle    = {Proceedings of the 31st {IEEE/ACM} International Conference on Automated
                  Software Engineering, {ASE} 2016, Singapore, September 3-7, 2016},
  pages        = {602--613},
  publisher    = {{ACM}},
  year         = {2016},
  url          = {https://doi.org/10.1145/2970276.2970356},
  doi          = {10.1145/2970276.2970356},
  bibsource    = {dblp computer science bibliography, https://dblp.org}
}

@inproceedings{DBLP:conf/charme/BozgaMT99,
  author       = {Marius Bozga and
                  Oded Maler and
                  Stavros Tripakis},
  editor       = {Laurence Pierre and
                  Thomas Kropf},
  title        = {Efficient Verification of Timed Automata Using Dense and Discrete
                  Time Semantics},
  booktitle    = {Correct Hardware Design and Verification Methods, 10th {IFIP} {WG}
                  10.5 Advanced Research Working Conference, {CHARME} '99, Bad Herrenalb,
                  Germany, September 27-29, 1999, Proceedings},
  series       = {Lecture Notes in Computer Science},
  pages        = {125--141},
  publisher    = {Springer},
  year         = {1999},
  url          = {https://doi.org/10.1007/3-540-48153-2\_11},
  doi          = {10.1007/3-540-48153-2\_11},
  bibsource    = {dblp computer science bibliography, https://dblp.org}
}

@inproceedings{DBLP:conf/spin/PinisettyRSTH17,
  author       = {Srinivas Pinisetty and
                  Partha S. Roop and
                  Steven Smyth and
                  Stavros Tripakis and
                  Reinhard von Hanxleden},
  editor       = {Hakan Erdogmus and
                  Klaus Havelund},
  title        = {Runtime enforcement of reactive systems using synchronous enforcers},
  booktitle    = {Proceedings of the 24th {ACM} {SIGSOFT} International {SPIN} Symposium
                  on Model Checking of Software, Santa Barbara, CA, USA, July 10-14,
                  2017},
  pages        = {80--89},
  publisher    = {{ACM}},
  year         = {2017},
  url          = {https://doi.org/10.1145/3092282.3092291},
  doi          = {10.1145/3092282.3092291},
  bibsource    = {dblp computer science bibliography, https://dblp.org}
}

@inproceedings{DBLP:conf/oopsla/Rinard03,
  author       = {Martin C. Rinard},
  editor       = {Ron Crocker and
                  Guy L. Steele Jr.},
  title        = {Acceptability-oriented computing},
  booktitle    = {Companion of the 18th Annual {ACM} {SIGPLAN} Conference on Object-Oriented
                  Programming, Systems, Languages, and Applications, {OOPSLA} 2003,
                  October 26-30, 2003, Anaheim, CA, {USA}},
  pages        = {221--239},
  publisher    = {{ACM}},
  year         = {2003},
  url          = {https://doi.org/10.1145/949344.949402},
  doi          = {10.1145/949344.949402},
  bibsource    = {dblp computer science bibliography, https://dblp.org}
}

@inproceedings{DBLP:conf/wodes/PinisettyFJM14,
  author       = {Srinivas Pinisetty and
                  Yli{\`{e}}s Falcone and
                  Thierry J{\'{e}}ron and
                  Herv{\'{e}} Marchand},
  editor       = {Jean{-}Jacques Lesage and
                  Jean{-}Marc Faure and
                  Jos{\'{e}} E. R. Cury and
                  Bengt Lennartson},
  title        = {Runtime Enforcement of Parametric Timed Properties with Practical
                  Applications},
  booktitle    = {12th International Workshop on Discrete Event Systems, {WODES} 2014,
                  Cachan, France, May 14-16, 2014},
  pages        = {420--427},
  publisher    = {International Federation of Automatic Control},
  year         = {2014},
  url          = {https://doi.org/10.3182/20140514-3-FR-4046.00041},
  doi          = {10.3182/20140514-3-FR-4046.00041},
  bibsource    = {dblp computer science bibliography, https://dblp.org}
}

@inproceedings{DBLP:conf/sac/PinisettyFJM14,
  author       = {Srinivas Pinisetty and
                  Yli{\`{e}}s Falcone and
                  Thierry J{\'{e}}ron and
                  Herv{\'{e}} Marchand},
  editor       = {Yookun Cho and
                  Sung Y. Shin and
                  Sang{-}Wook Kim and
                  Chih{-}Cheng Hung and
                  Jiman Hong},
  title        = {Runtime enforcement of regular timed properties},
  booktitle    = {Symposium on Applied Computing, {SAC} 2014, Gyeongju, Republic of
                  Korea - March 24 - 28, 2014},
  pages        = {1279--1286},
  publisher    = {{ACM}},
  year         = {2014},
  url          = {https://doi.org/10.1145/2554850.2554967},
  doi          = {10.1145/2554850.2554967},
  bibsource    = {dblp computer science bibliography, https://dblp.org}
}

@article{DBLP:journals/fmsd/KonighoferABHKT17,
  author       = {Bettina K{\"{o}}nighofer and
                  Mohammed Alshiekh and
                  Roderick Bloem and
                  Laura R. Humphrey and
                  Robert K{\"{o}}nighofer and
                  Ufuk Topcu and
                  Chao Wang},
  title        = {Shield synthesis},
  journal      = {Formal Methods Syst. Des.},
  volume       = {51},
  number       = {2},
  pages        = {332--361},
  year         = {2017},
  url          = {https://doi.org/10.1007/s10703-017-0276-9},
  doi          = {10.1007/S10703-017-0276-9},
  bibsource    = {dblp computer science bibliography, https://dblp.org}
}

@article{DBLP:journals/tcs/AlurD94,
  author       = {Rajeev Alur and
                  David L. Dill},
  title        = {A Theory of Timed Automata},
  journal      = {Theor. Comput. Sci.},
  volume       = {126},
  number       = {2},
  pages        = {183--235},
  year         = {1994},
  url          = {https://doi.org/10.1016/0304-3975(94)90010-8},
  doi          = {10.1016/0304-3975(94)90010-8},
  bibsource    = {dblp computer science bibliography, https://dblp.org}
}

@inproceedings{DBLP:conf/cav/CimattiMT11,
  author       = {Alessandro Cimatti and
                  Sergio Mover and
                  Stefano Tonetta},
  editor       = {Ganesh Gopalakrishnan and
                  Shaz Qadeer},
  title        = {Efficient Scenario Verification for Hybrid Automata},
  booktitle    = {Computer Aided Verification - 23rd International Conference, {CAV}
                  2011, Snowbird, UT, USA, July 14-20, 2011. Proceedings},
  series       = {Lecture Notes in Computer Science},
  pages        = {317--332},
  publisher    = {Springer},
  year         = {2011},
  url          = {https://doi.org/10.1007/978-3-642-22110-1\_25},
  doi          = {10.1007/978-3-642-22110-1\_25},
  bibsource    = {dblp computer science bibliography, https://dblp.org}
}

@inproceedings{DBLP:conf/rv/SistlaZF11,
  author       = {A. Prasad Sistla and
                  Milos Zefran and
                  Yao Feng},
  editor       = {Sarfraz Khurshid and
                  Koushik Sen},
  title        = {Runtime Monitoring of Stochastic Cyber-Physical Systems with Hybrid
                  State},
  booktitle    = {Runtime Verification - Second International Conference, {RV} 2011,
                  San Francisco, CA, USA, September 27-30, 2011, Revised Selected Papers},
  series       = {Lecture Notes in Computer Science},
  pages        = {276--293},
  publisher    = {Springer},
  year         = {2011},
  url          = {https://doi.org/10.1007/978-3-642-29860-8\_21},
  doi          = {10.1007/978-3-642-29860-8\_21},
  bibsource    = {dblp computer science bibliography, https://dblp.org}
}
\appendix
\section{Appendix- Illustration of execution of the enforcement algorithm}\label{appendix}

\begin{table}[htbp]
  \centering\scriptsize
    \begin{tabular}{|c|c|c|c|c|c|c|}
    \hline
        %\multicolumn{2}{|c|}{\multirow{2}{*}{Benchmarks}} & \multirow{2}{*}{Depth} & \multicolumn{3}{c|}{Time (in secs)}  & \multirow{2}{*}{AT}\\\cline{4-6}
        \multirow{2}{*}{$t$} & {Output} & {Current} & {Current values} & {Enforced} & \multirow{2}{*}{Next Mode} & \multirow{2}{*}{Remark}\\
        & Event & Mode & $(d,v,a)$ & Event & & \\
        %& Event & State & Event & State \\
    \hline
       $0.0$ & $stutter$ & \textit{InitMode} & $95,10,0$ & $stutter$ & \textit{InitMode} & System dwelling\\
       $1.0$ & \textit{acc} & \textit{InitMode} & $98.89,10,0$  & $stutter$ & \textit{InitMode} & Suppress/delay\\
       $1.29$ & $-$ & \textit{InitMode} & $100.02,10,1$  & $acc$ & \textit{AccelerateMode} & Event delayed\\
       $2.0$ & $acc$ & \textit{AccelerateMode} & $100.56,10.72,2$  & $acc$ & \textit{AccelerateMode} & Event executed\\
       $3.0$ & $acc$ & \textit{AccelerateMode} & $104.74,12.72,3$  & $acc$ & \textit{AccelerateMode} & Event executed\\
       $4.0$ & $stutter$ & \textit{AccelerateMode} & $104.42,15.72,3$  & $stutter$ & \textit{AccelerateMode} & System dwelling\\
       $5.0$ & $stutter$ & \textit{AccelerateMode} & $101.11,18.72,3$  & $stutter$ & \textit{AccelerateMode} & System dwelling\\
       $5.22$ & $-$ & \textit{AccelerateMode} & $99.98,19.38,0$  & $cru$ & \textit{CruiseMode} & Event inserted\\
       $6.0$ & $cru$ & \textit{CruiseMode} & $95.64,19.38,0$  & $cru$ & \textit{CruiseMode} & Event executed\\
       $7.0$ & $stutter$ & \textit{CruiseMode} & $9.15,19.38,0$  & $stutter$ & \textit{CruiseMode} & System dwelling\\
       $8.0$ & $stutter$ & \textit{CruiseMode} & $84.82,19.38,0$  & $stutter$ & \textit{CruiseMode} & System dwelling\\
       $9.0$ & $acc$ & \textit{CruiseMode} & $79.17,19.38,0$  & $stutter$ & \textit{CruiseMode} & Suppressed\\
       $10.0$ & $stutter$ & \textit{CruiseMode} & $73.84,19.38,0$  & $stutter$ & \textit{CruiseMode} & System dwelling\\
       $10.67$ & $-$ & \textit{CruiseMode} & $69.99,19.38,-1$  & $dec$ & \textit{DecelerateMode} & Event inserted\\
       $11.0$ & $stutter$ & \textit{DecelerateMode} & $68.18,19.04,-1$  & $stutter$ & \textit{DecelerateMode} & System dwelling\\
       $12.0$ & $stutter$ & \textit{DecelerateMode} & $63.53,18.04,-1$  & $stutter$ & \textit{DecelerateMode} & System dwelling\\
       $13.0$ & $stutter$ & \textit{DecelerateMode} & $59.87,17.04,-1$  & $stutter$ & \textit{DecelerateMode} & System dwelling\\
       $14.0$ & $stutter$ & \textit{DecelerateMode} & $57.22,16.04,-1$  & $stutter$ & \textit{DecelerateMode} & System dwelling\\
       $15.0$ & $stutter$ & \textit{DecelerateMode} & $55.56,15.04,-1$  & $stutter$ & \textit{DecelerateMode} & System dwelling\\
       
    \hline
    \end{tabular}
  \vspace{10pt}
  \caption{Presents the enforced events that satisfy the safety property $\varphi$ for the event sequence from the system shown in Figure~\ref{fig:event-stream}. N.B. $stutter$ represents a stutter event.}
  \label{table:table1}
%  \vspace{-20pt}
\end{table}

A detailed execution of the enforcement algorithm (Algorithm~\ref{algo:enforcer}) is presented in Table~\ref{table:table1}. $t$ and \emph{Output Event} represent the physical time, and at which an event comes from the system, respectively. \emph{Current Mode} and \emph{Current Values} signify the current location where control resides and current state of the automaton. \emph{Enforced Event} presents the rectified events from the enforcer. \emph{Next Mode} shows the transition of the control in the automaton after the enforced event. \emph{Remark} provides comments on the execution.

\section{Appendix- Proof of Proposition~\ref{prop:prop2}} \label{appendix2}
\begin{proof} [Correctness of the enforcement algorithm]
The correctness of the algorithm is established via induction over the discrete execution steps of the enforced output sequence $\sigma'$. Let $E^*_{\varphi}$ denote the enforcement function. Inductive Hypothesis: Assume that up to time $t$, the enforcer has processed an input stream $\sigma$ to produce a sound output sequence $\sigma' = E^*_{\varphi}(\sigma)$ such that $\sigma' \models \varphi$.We analyze the inductive step during the transition to time $t+1$ under three exhaustive cases:
\begin{itemize}
    \item Case 1 (No modification): A new event $e$ is observed at time $t$. If the extension of the trace satisfies the property, i.e., $\sigma' \cdot e \models \varphi$, Algorithm~\ref{algo:enforcer} outputs the event without modification. Thus, $E^*_{\varphi}(\sigma \cdot e) = \sigma' \cdot e \models \varphi$.
    \item Case 2 (Suppression/Substitution): If the incoming event causes a violation, i.e., $\sigma' \cdot e \not\models \varphi$, the enforcer intercepts and suppresses $e$, substituting it with a safe stutter event $d$. This ensures that $E^*_{\varphi}(\sigma \cdot e) = \sigma' \cdot d \models \varphi$.
    \item Case 3 (Proactive Inter-tick Insertion): If no boundary event causes a violation, but continuous state evolution within the dense-time interval $\tau \in (t, t+1)$ threatens a state invariant violation, the proactive look-ahead mechanism is triggered. At the critical moment $\tau$, Algorithm~\ref{algo:enforcer} inserts a corrective event $e'$ to steer the trajectory back into the safe space, ensuring $\sigma' \cdot e' \models \varphi$.
\end{itemize}
Consequently, for all valid transitions, the output trace at $t+1$ preserves safety ($E^*_{\varphi} \models \varphi$), completing the proof. $\Box$
\end{proof}

\end{document}